\documentclass[preprint]{article}
\usepackage{graphicx}
\usepackage{amsmath}
\usepackage{array}
\usepackage{arydshln}
\usepackage{amssymb}
\usepackage{rotating}
\usepackage[hidelinks]{hyperref}
 \usepackage{amsthm}
 \usepackage{subcaption}
 \usepackage{float}

\usepackage{stmaryrd}

\usepackage{booktabs}
\usepackage{collcell}
\newcommand*{\movedown}[1]{%
  \smash{\raisebox{-1ex}{#1}}}
\newcolumntype{q}{>{\collectcell\movedown}r<{\endcollectcell}}

\usepackage{array}

 \usepackage{xcolor}
\newtheorem{thm}{Theorem}[section]
\newtheorem{cor}{Corollary}[section]
\newtheorem{prop}{Proposition}[section]
\newtheorem{conj}{Conjecture}[section]
\newcommand*{\img}[1]{%
    \raisebox{-.3\baselineskip}{%
        \includegraphics[
        height=\baselineskip,
        width=\baselineskip,
        keepaspectratio,
        ]{#1}%
    }%
}
\usepackage{wrapfig}
\graphicspath{ {./images/} }

\newcommand{\be}{\begin{equation}}
\newcommand{\ee}{\end{equation}}
\newcommand{\ba}{\begin{eqnarray}}
\newcommand{\ea}{\end{eqnarray}}

\newcommand{\al}{\alpha}

\begin{document}
\setlength\dashlinedash{.6pt}
\setlength\dashlinedash{2pt}
\hoffset=-.4truein\voffset=-0.5truein
\setlength{\textheight}{8.5 in}

\begin{titlepage}
\begin{center}
\hfill { 
}\\
\vskip 20 mm
{\large \bf   
A relation between the HOMFLY--PT and Kauffman polynomials via characters}

\vskip 10mm
{\bf {Andreani Petrou and Shinobu Hikami }}
\vskip 5mm

Okinawa Institute of Science and Technology Graduate University,\\
 1919-1 Tancha, Okinawa 904-0495, Japan.

\vskip 5mm
{\bf Abstract}
\vskip 3mm

\end{center}
The HOMFLY--PT and Kauffman polynomials are related to each other for special classes of knots constructed by full twists and Jucys-Murphy twists. The conditions for this relation are articulated in terms of characters of the Birman-Murakami-Wenzl algebra. The latter are the coefficients in the expansion of the Kauffman polynomial involving the quantum dimensions of $SO(N+1)$.
This expansion  
allows to prove the conjectural 1-1 correspondence between the HOMFLY--PT/Kauffman relation and the Harer-Zagier (HZ) factorisability for a large family of 3-strand knots. The conjecture  remains open for knots with 4 and higher strands. 
\vskip 5mm
\paragraph{Keywords.} Kauffman polynomial, HOMFLY--PT polynomial, BMW algebra, BPS invariants,  Racah coefficients, $SO(N)$ Lie group, quantum groups.

 \end{titlepage}
 \tableofcontents

\section{Introduction}
 The HOMFLY--PT  and Kauffman polynomials, as 
 two-variable generalisations of the celebrated Jones polynomial, have attracted a lot of attention by both mathematicians and physicists. They are both strong knot invariants, which,  in general,  can distinguish different sets of knots and links. 
 In the context of TQFT, these correspond  to  the  Lie groups 
 $SU(N)$ and $SO(N+1)$, respectively. From this perspective, it becomes obvious that they coincide at $N=2$,  corresponding to  the Jones polynomial, due to the isomorphism $\mathfrak{su}(2)\simeq \mathfrak{so}(3)$. Articulating a general relation between these polynomials for arbitrary $N$ is a non-trivial task, however, and it can only be achieved for special classes of knots and links, as explained below.

A  relation between the HOMFLY--PT and Kauffman polynomials  was first found for torus knots by Labastida and Pérez  \cite{Perez}. Explicitly
 \ba\label{KFhatINTRO}
&&\hspace{-9mm}H(A,z)=KF_{\rm even}(iA,iz)-\frac{z}{A-A^{-1}} KF_{\rm odd}(iA,iz)=:\widehat{KF}(A,z), 
 \ea  
where the subscripts  refer to the even and odd powers of $A$ and $z$.
 Finding more knots, beyond the torus family that satisfy this relation has  been an open problem for nearly three decades. However, a solution has been recently provided in \cite{Petrou2}, where  relation (\ref{KFhatINTRO}) was proven to hold for  several infinite families of hyperbolic knots. 
In Theorem~\ref{thm:Kjkl,H-KF} of the present article, we extend this result by proving that the HOMFLY--PT/Kauffman relation (\ref{KFhatINTRO}) holds for a more general family of 3-strand knots, which, as argued in \cite{Petrou3}, can be thought of as  a hyperbolic extension of torus knots. 

 From a physics perspective, the HOMFLY--PT/Kauffman relation has a peculiar implication for the BPS states of topological  strings \cite{Bouchard,Mironov}. In particular, the difference between the $SU(N)$ and $SO(N+1)$ polynomials, corresponds to the difference between oriented and unoriented surfaces, and hence it is described as the sum of the 1- and 2-cross cap BPS  invariants, as
discussed in \cite{Petrou2}. When $(\ref{KFhatINTRO})$ is true, the 2-cross cap BPS invariants vanish \cite{Petrou2}.

 The hyperbolic families that satisfy the HOMFLY--PT/Kauffman relation were discovered through the Harer-Zagier (HZ) transform  of the
  HOMFLY--PT polynomial, which  has been  extensively investigated
  in the previous articles \cite{Petrou1,Petrou2,Petrou3}. 
  The HZ transform  is a discrete Laplace transform that maps the HOMFLY--PT polynomial into a rational function.
  This function is said to be \emph{factorisable} when both the numerator and the denominator become factorised into a product of monomials. 
HZ factorisability, which occurs for torus knots, motivated the construction of several families of hyperbolic  knots, which are generated by combinations of full twists and Jucys-Murphy twists, as described in \cite{Petrou2}. In the 3-strand case, these families are precisely the ones that are proven to satisfy the HOMFLY--PT/Kauffman relation (\ref{KFhatINTRO}), in support of the conjecture  \cite{Petrou2}
\vspace{-2mm}
\be\label{conjINTRO}
\text{HOMFLY--PT/Kauffman relation $\iff$ HZ factorisability.}
\ee

A revealing approach that enables a deeper understanding of  (\ref{KFhatINTRO}) and (\ref{conjINTRO}) comes from representation theory.
In the previous article \cite{Petrou3}, the $SU(N)$  character expansion \cite{Morozov,Petrou3} has provided insights into the HOMFLY--PT polynomial and the factorisability properties of its HZ transform. Explicitly, the $SU(N)$ expansion for a knot with an $m$ strand braid representative reads 
\be\label{XSUINTRO}
H = A^{-w}\sum_{Q} h^{Q} \frac{S_{Q}}{S_{[1]}}=: {\mathbb{X}}^{SU},\ee
where the sum is over Young diagrams $Q$ with $m$ boxes, $S_Q$ are the Schur functions and $h^Q$ are the Racah coefficients  ($6j$ symbols), determined by traces of $R$-matrices. 
Conditions for HZ factorisability for each fixed number of strands $m$, can be expressed in terms of the coefficients $h^Q$. 
In the present paper, aiming to gain deeper insights into the HOMFLY--PT/Kauffman relation, we consider a similar expansion for the Kauffman polynomial. 

In Sec.~\ref{sec:dubrovnik}, we take a naive approach, in which we difine ${\mathbb{X}}^{SO}$ similar to (\ref{XSUINTRO}), with only the Schur functions $S_Q$ being replaced by the quantum dimensions $d_{Q}$ of $SO(N+1)$. 
A derivation of $d_Q$ from the classical formula of the $SO(N+1)$ Schur functions is included in Appendix A. 
It turns out that for torus knots $T(m,n)$, with  $m$ odd, ${\mathbb{X}}^{SO}$ yields precisely the Dubrovnik version of the Kauffman polynomial $Y(A,q)$   \cite{Perez,Stevan}. However, more generally holds  
\be\label{Delta}
 Y(A,q)  = {\mathbb{X}}^{SO}(A,q)+ \delta(A,q),
\ee
where $\delta(A,q)$ denotes finite correction terms. 
These are explicitly computed for several examples of HZ factorisable knots and links (see e.g. Table~\ref{tab:difference}). 

In Sec.~\ref{sec:BMW},   the origin of the correction terms $\delta$ is explained by the  Birman-Murakami-Wenzl (BMW) algebra representations, which give rise to the  Kauffman character expansion  \cite{Birman,Murakami1}.  
In the 3-strand case, for example, this reads
\be\label{SO(N)KF}
KF(\alpha,\beta) =\alpha^{-w} \bigg(\chi_0(\al,\beta)+\sum_{Q} \chi_Q(\beta) \eta_Q(\al,\beta)\bigg),
\ee
where the coefficients $\chi_Q$  coincide 
with the Racah coefficients $h^Q$ at $\beta=iq$, and $\eta_Q$ are the normalised quantum dimensions. 
The term $\chi_0$, which  gives rise to the correction $\delta$, is obtained as a trace of products of BMW algebra representations.  
The BMW algebra expansion allows to articulate explicit conditions for the HOMFLY--PT/Kauffman  relation (\ref{KFhatINTRO}) in terms of $\chi_Q$, which are explicitly 
given in the 3- and 4-strand cases in Sec.~\ref{sec:H/KF}. Such conditions are used to derive the above mentioned results regarding the 3-strand case of 
conjecture (\ref{conjINTRO}). 

Finally, in Appendix B, using the Jaeger's theorem \cite{Jaeger} we present a formal  expansion of  the state sum of  the
 Kauffman polynomial via $SU(N)$ characters.

\section{The Dubrovnik polynomial via characters of $SO(N+1)$}\label{sec:dubrovnik}

The  Kauffman polynomial  $KF(A,z)= A^{-w }L(A,z)$  can be combinatorially defined by the skein relation for $L(A,z)$
\ba\label{skeinKF}
&&L(\img{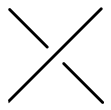})+L(\img{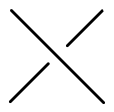})= z ( L(\img{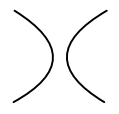})
+ L( \img{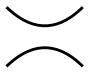}))\nonumber\\
&& L(\img{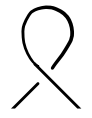}) = A L(\img{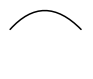}), \hskip 2mm L(\img{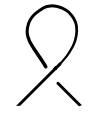})= A^{-1}L(\img{NonLoop.PNG});\;\;L(\bigcirc)=1.
\ea
The unnormalised version, labeled by $\overline{KF}$, is such that $\overline{KF}(\bigcirc)=\bar{L}(\bigcirc)=(\frac{A+A^{-1}}{z}-1)$. The Dubrovnik version of the Kauffman polynomial was defined as $\widetilde{KF}(A,z)=KF(iA,iz)$ in \cite{Petrou2}. 
There is an alternative version of the Dubrovnik polynomial, which is easier to express as a character expansion. This is
\be
Y(\mathcal{K};A,z)=A^{-w}\Lambda(\mathcal{K};A,z),
\ee
where  $\Lambda(\mathcal{K};A,z)$
is defined by a skein relation which differs slightly from  (\ref{skeinKF}), as
\be
\Lambda(\img{Kauffpositive.PNG})-\Lambda(\img{Kauffnegative.PNG})= z ( \Lambda(\img{Kauffzero.PNG})
- \Lambda( \img{KfInfinity.PNG}));\;\;\Lambda(\bigcirc)=1.
\ee
The unnormalised version is defined as $\bar{Y}(\bigcirc)=\bar{\Lambda}(\bigcirc)=(\frac{A-A^{-1}}{z}+1)$. 
The two versions 
can be related by\footnote{Note that for the normalised version $KF(\mathcal{K};A,z) = (-1)^{c+1}Y(\mathcal{K};iA,-iz)$ \cite{Kauffman}, that is, the normalisation factor can be thought of as having an extra component.}
 \be\label{KF-Y}
  \overline{KF}(\mathcal{K};A,z) = (-1)^{c}\bar{Y}(\mathcal{K};iA,-iz),
  \ee
 where $c$ is the number of components of the link $\mathcal{K}$.

As mentioned in the introduction, the character expansion for the  HOMFLY--PT polynomial  involving the Schur functions, which are the characters of  $SU(N)$, was outlined in \cite{Petrou3}. 
 In particular, according to the $SU(N)$ character expansion, the HOMFLY--PT polynomial $H(\mathcal{K})$ for a knot $\mathcal{K}$ with an $m$ strand braid representative 
 is expressed as \cite{Morozov,Petrou3}
\be\label{SU(N)}
H (\mathcal{K})=  A^{-w}\sum_{Q} h^{Q} (\mathcal{K})\frac{S_{Q}}{S_{[1]}}=:\mathbb{X}^{SU}(\mathcal{K}),
\ee
where the sum is over Young diagrams $Q$ with $m$ boxes, $h^{Q}$ are the so called the Racah coefficients, which are characters of the Artin braid group, and $S_{Q}$ are the Schur functions, that can be derived from the classical Weyl  formula. 

It is natural to attempt a similar expansion for the  Kauffman polynomial. 
Naively, one may approach this problem by leaving  the coefficients $h^Q$ unchanged, and simply replace $S_Q\to d_Q$, where $d_Q$ are the quantum dimensions  of $SO(N+1)$, to define
 \be\label{SO(N)}
{\mathbb{X}}^{SO}:= A^{-w}\sum_Q h^{Q}\frac{d_{Q}}{d_{[1]}}.
\ee
 A  formula to obtain the quantum dimensions $d_Q$ corresponding to a Young diagram $Q$, is given in Appendix A. 
For the fundamental representation, the Young diagram has a single box and it is denoted by $[1]$, for which $d_{[1]}$ reads 
\be\label{qd}
d_{[1]}
=1 + \frac{A-A^{-1}}{q-q^{-1}}=\bar{Y}(\bigcirc;A,q).
\ee
Note the difference between this expression and the $SU(N)$  case, for which $S_{[1]}= (A-A^{-1})/(q-q^{-1})=\bar{H}(\bigcirc;A,q)$.

The unnormalised Dubrovnik polynomial $\bar Y$ can be expressed in terms of $\bar {\mathbb{X}}^{SO}=d_{[1]}{\mathbb{X}}^{SO}$, 
plus  correction terms, denoted collectively by $\bar \delta$, as
\be\label{naivebarY}
\bar Y =\bar {\mathbb{X}}^{SO}+\bar \delta.
\ee
Notably, for  HZ factorisable knots (c.f. \cite{Petrou2}), 
 the term $\bar\delta$ is a simple polynomial, as can be seen through several examples that follow.
\paragraph{2 strands.}
For 2-strand braids the quantum dimensions $d_Q$, as given in Appendix A, are
\ba\label{d[2]}
&&d_{[2]}= (1 + \frac{\{q^2\}}{\{A q\}}) S_{[2]}; \hskip 2mm S_{[2]}=\frac{\{A\}\{Aq\}}{\{q\}\{q^2\}},\nonumber\\
&&\hspace{-6mm}d_{[11]}=(1+ \frac{\{q^2\}}{\{A q^{-1}\}}) S_{[11]}; \hskip 2mm S_{[11]}=\frac{\{A\}\{A q^{-1}\}}{\{q\}\{q^2\}}.
\ea
These 
satisfy the plethysm  (sum) rule $d_{[1]}^2= d_{[2]}+d_{[11]}+1$.  The Racah coefficients are $h^{[2]}=q^w$ and $h^{[11]}=(-q)^{-w}$, where $w$ is the writhe of the braid diagram.
The Dubrovnik polynomial $\bar Y(A,q)$ for 2-stranded  knots and links\footnote{Knots and links  correspond to odd and even $n$, respectively.}, which are always torus $T(2,n)$ with $w=n$,
can be expressed in terms of $d_{Q}$ by the character expansion in (\ref{naivebarY}) as  
\be\label{KauffmanCH}
 \bar Y(T(2,n);A,q) = A^{-n}\left(q^{n}d_{[2]}+ (-q)^{-n} d_{[11]}+ A^{-n}\right)
\ee
and hence the correction term is simply $\bar\delta=A^{-2n}$.

\vskip1mm
\noindent{}Remark. Since $d_Q$ in (\ref{d[2]}) can be divided into two parts corresponding to odd and even powers of $A$, the same holds for
 $ \bar Y (T(2,n)) $. When $n$ is odd, for instance, 
the odd part becomes $ \bar{Y}(T(2,n))_{\rm odd}= A^{-n}( q^{n} S_{[2]}-  q^{-n} S_{[11]})$, 
which coincides with the HOMPLY-PT character expansion 
in \cite{Petrou3}.

\vskip1mm
\noindent{}Example $T(2,3)$.
 The Dubrovnik polynomial, obtained by (\ref{d[2]}) and (\ref{KauffmanCH}), dividing with $d_{[1]}$, becomes
 \ba\label{K(2,3)}
 Y(T(2,3))
 \hspace{-1mm}&=& \hspace{-1mm} A^{-4} (1-q^2-\frac{1}{q^2})+A^{-2} (\frac{1}{q^2}+q^2)+ A^{-3} (q-\frac{1}{q})-A^{-5}(q-\frac{1}{q})\nonumber\\
  \hspace{-1mm}&=&  \hspace{-1mm}A^{-4}(-1-z^2)+A^{-2} (2+z^2)+A^{-3}z-A^{-5} z,
 \ea
 which, by recalling that
 $Y(iA,-iz)=KF(A,z)$,
  coincides with the standard Kauffman polynomial 
  $KF(3_1)=(A^{-5}+ A^{-3})z + A^{-4}(z^2-1)+A^{-2}(z^2-2)$ (see e.g. \cite{Bar-Natan}). 

\paragraph{3 strands.}
The  Young diagrams with $3$ boxes are $[3],[21],[111]$.
The corresponding quantum dimensions  for $SU(N)$ are
\ba\label{schur3}
&&S_{[3]}= \frac{\{A\}\{A q\}\{A q^2\}}{\{q\}\{q^2\}\{q^3\}},\;\; S_{[21]}= \frac{\{A\}\{A q\}\{A q^{-1}\}}{\{q^2\}^2\{q^3\}}\nonumber\\
&&\hspace{12mm} S_{[111]}= \frac{\{A\}\{A q^{-1}\}\{A q^{-2}\}}{\{q\}\{q^2\}\{q^3\}},
\ea
while for $SO(N+1)$ they become
\ba\label{3strand1}
&&d_{[3]} 
=( 1+ \frac{\{q^3\}}{\{A q^2\}}) S_3,\;\;d_{[21]} 
= (1+ \frac{\{q^3\}}{\{A\}})S_{21}\nonumber\\
&&\hspace{12mm}d_{[111]} 
=(1+ \frac{\{q^3\}}{\{A q^{-2}\}})S_{111}.
\ea
The plethysm  rule in this case becomes $d_{[1]}^3= d_{[3]}+ 2 d_{[21]}+ d_{[111]}+3 d_{[1]}$.

For torus knots $T(3,n)$ (i.e. $n\neq3k$) with braid  $(\sigma_2\sigma_1)^n$, the HOMFLY--PT character expansion reads
\be
\bar{H}(T(3,n))= A^{-2n}(q^{2n}S_{[3]} -S_{[21]} + q^{-2n}S_{[111]}).
\ee
Here the Racah coefficients $h^{[3]}=q^w$ and $h^{[111]}=(-q)^{-w}$ are the same as before, while  $h^{[21]}=-1$  
was computed in \cite{Petrou3} as the trace or the $R$-matrices
$(R_2R_1)^n$.
The Dubrovnik polynomial is expressed by the character expansion
\be\label{K(3,n)}
\bar{Y}(T(3,n);A,q) = A^{-2n} \left(q^{2n} d_{[3]}- d_{[21]}+ q^{-2n} d_{[111]}\right).
\ee
Remarkably, this  is exactly the same as ${\mathbb{X}}^{SO}$ and hence $\bar\delta=0$.
This result can also be derived from the Chern-Simons theory \cite{Stevan}.  
For torus links $T(3,3k)$, with $k\geq0$, we find\footnote{This result correctly yields the relation $H(T(3,3k))-\widehat{KF}(T(3,3k))=-3A^{8k}$, found in \cite{Petrou2}.} 
\be\label{KT(3,n)CH}
{\bar{Y}(T(3,3k);A,q)= A^{-6k}(q^{6k}d_{[3]}+2d_{[21]}+q^{-6k}d_{[111]})+3A^{-8k}d_{[1]}.}
\ee

For hyperbolic knots which are HZ-factorisable, the term $\bar \delta(A,q)$ remains relatively simple as can be seen, e.g., in
 Table~\ref{tab:difference}. 

   \begin{table}[htbp!]
\caption{The correction terms $\bar\delta=\bar Y-\bar{\mathbb{X}}^{SO}$ for all  3-strand HZ-factorisable knots $\mathcal{K}$ with up to 13 crossings.}\label{tab:difference} 
\hfill\begin{tabular}{ll}
\toprule
\scalebox{0.92}[0.92]{$\mathcal{K}$} & \scalebox{0.8}[0.8]{$\bar \delta= \bar Y-  \bar {\mathbb{X}}^{SO}$} \\
\midrule
$5_2$ &  \scalebox{0.91}[0.91]{ $A^2+A^8(q^{-2}+ q^2)- A^6(q^{-4}+1+ q^4)$} \\
\hline
$8_{20}$& \scalebox{0.91}[0.91]{  $A^{-2} + A^6(q^{-4} + 1 + q^4)
- A^4 (q^{-6}+ q^{-2} +q^2 + q^6)$}
\\
\hline
$10_{125}$& \scalebox{0.91}[0.91]{ $A^{-6} + A^4 (q^{-6} + q^{-2} + q^2 + q^6) - 
 A^2 (q^{-8} + q^{-4} + 1 + q^{4} + q^{8})$}\\
\hline
$10_{139}$ & \scalebox{0.91}[0.91]{ $A^{-16} (q^{-2}+q^2) + A^{-10} - A^{-14}(q^{-4} + 1 + q^4)$}
\\
\hline
$10_{161}$ & \scalebox{0.91}[0.91]{ $A^6 +A^{14}(q^{-4}+1 + q^4)-A^{12}(q^{-6}+q^{-2}+q^2 +q^6)$ }\\
\hline
$12n_{235}$ & \scalebox{0.91}[0.91]{$
A^{-2}(q^{-8}+ q^{-4} + 1 + q^4 + q^8) + A^{10} - (q^{-10}+ q^{-6} + q^{-2} + q^2 + q^6 +q^{10})$} \\
\hline
$12n_{242}$ & \scalebox{0.91}[0.91]{ $ A^{-18} + A^{-14} - A^{-16}(q^{-2}+q^2)$}\\
\hline
$12n_{749}$ & \scalebox{0.91}[0.91]{   $A^{-2}+ A^{-12} (q^{-6} + q^{-2} + q^2 + q^6) - A^{-10}(q^{-8}+ q^{-4} + 1 + q^4 + q^8)$}
 \\
\bottomrule
\end{tabular}
   \end{table}
   
  \noindent{}From this table, we observe the following relations for knots related to each other by full twists (i.e. $5_2\otimes F_3=10_{139}$, $8_{20}\otimes F_3=10_{161}$ and  $10_{125}\otimes F_3=12n_{749}$; c.f. Fig.~2 in \cite{Petrou2}): $A^{-8}\bar\delta(10_{125};A^{-1},q)=\bar\delta(12n_{749};A,q)$, $A^{-8}\bar\delta(10_{161};A,q)=  \bar\delta(8_{20};A,q)$ and $A^{-8}\bar\delta(10_{139};A^{-1},q)=\bar\delta(5_2;A,q)$.

\vskip2mm
\noindent{$\circ$ Example $5_2^{3k}:=5_2\otimes F_3^k$.}
This family is obtained by attaching $k$ full twists $F_3^k:=(\sigma_2\sigma_1)^{3k}$ to a 3-strand braid representative of $5_2$, such as $\sigma_1\sigma_2^3\sigma_1\sigma_2^{-1}$  \cite{Petrou2}.  Its Dubrovnik character expansion 
for $k\geq1$, becomes
\ba\label{char52F3k}
\bar{Y}(5_2^{3k};A,q)&=& A^{-4-6k}(q^{4+6k}d_{[3]}+h^{[21]}d_{[21]}+q^{-4-6k}d_{[111]})\\\nonumber
&&+A^{-2-8k}-A^{-6-8k}(q^4+1+q^{-4})+A^{-8(k+1)}(q^2+q^{-2}),
\ea
where $h^{[21]}=-q^4+q^2-1+q^{-2}-q^{-4}$ for all members of the family, since this Racah coefficient remains unchanged under full twists \cite{Petrou3}.

\vskip2mm
\noindent{}$\circ$ Example $\mathcal{K}_{j,2}:=\sigma_1\sigma_2^{-2j-1}\otimes E_3^2$. With braid $\sigma_1\sigma_2^{-2j-1}(\sigma_1\sigma_2^2\sigma_1)^2$, $w=8-2j$ and $h^{[21]}=\sum_{i=0}^{2j+6}(-1)^{i+1}q^{2(j-i)+6}$ for $j\geq0$, we find
\be\label{charP23j}
\bar{\delta}(\mathcal{K}_{j,2};A,q)=
A^{4j-6}-A^{2j-12}\left(\sum_{i=0}^{j+3}q^{2j+6-4i}\right)+A^{2j-14}\left(\sum_{i=0}^{j+2}q^{2j+4-4i}\right).
\ee

It is noteworthy that in the Table~\ref{tab:difference} and in the formulas (\ref{char52F3k})-(\ref{charP23j}) the $q$-polynomials that appear in the parentheses in the correction term $\bar\delta(A,q)$, contain the negative and positive terms appearing in $h^{[21]}$. As a result, at $A=1$,  $\bar\delta(1,q) = h^{[21]}+1.$ Moreover, 
 $\bar\delta=0$ when 
 $A=q$. These properties do no longer hold in the case of links, as can be seen in the examples below.

\vskip2mm
\noindent{}$\circ$  Example $L7n1\{0\}$.  $\sigma_1^{-1}\sigma_2^{-1}\sigma_1^{-1}\sigma_2^{-1}\sigma_1^{-1}\sigma_2^{-2}$, $ w=-7$, $h^{[21]}=q^{-1} - q$
\be\label{L7n1}
\bar{\delta}(L7n1\{0\})=
A^8 + \frac{A^{11}}{ q- q^{-1}} -\frac{A^9 ( q^{-2}-1 + q^2)}{ q- q^{-1}}.
\ee
\vskip2mm
\noindent{}$\circ$  Example $L7n2\{0\}$. $\sigma_1^{-1}\sigma_2\sigma_1^{-1}\sigma_2^{-1}\sigma_1\sigma_2^{-2}	$, $ w=-3$, $h^{[21]}=q^{-5} - q^{-3} + q^{-1} - q + q^3 - q^5$
\ba\label{L7n2}
\bar{\delta}(L7n2\{0\})&=&
+1+
\frac{A^7 (q^{-4} - q^{-2} + 1 - q^2 + q^4)}{q-q^{-1}}\\
&&+\frac{ A^5 (-q^{-6} + q^{-4} - q^{-2} + 1 - q^2 + q^4 - q^6)}{q-q^{-1}} 
.\nonumber
\ea
  
  \noindent{}Remark. For all 3-strand knots and links, ${\mathbb{X}}^{SO}$ 
  and  ${\mathbb{X}}^{SU}$ 
  are related  by
  \ba
 {\mathbb{X}}^{SO}-{\mathbb{X}}^{SU}\hspace{-2mm} &=& \hspace{-2mm}  A^{-w}\sum_Q h^{Q}\left(\frac{d_{Q}}{d_{[1]}}- \frac{S_{Q}}{S_{[1]}}\right) \nonumber\\
  &=&\hspace{-2mm} A^{-w}\left(\frac{(A-q)(1+A q)}{A(q^2-1)(1+q^2 + q^4)}(q^{2+w}+ h^{[21]}(1+q^4) + q^{2-w})\right)\nonumber\\
 &=& \hspace{-2mm}    \frac{A^{-w}}{q^2+1+q^{-2}}\left( \frac{\{A\}}{\{q\}}-1\right) ( q^w + h^{[21]}(q^2+q^{-2}) + q^{-w}).
 \ea

\paragraph{4 strands.}
There are five Young diagrams with 4 boxes. The corresponding quantum dimensions
$d_{[4]},d_{[31]},d_{[22]},d_{[211]}$ and $d_{[1111]}$ are given in Table~\ref{tab:d_Q}  of Appendix A. 
In general $h^{[4]}=q^w$ $h^{[1111]}=(-q)^{-w}$ and hence,
\be
{\mathbb{X}}^{SO}=A^{-w}\left(q^{w}d_{[4]}+h^{[31]}d_{[31]}+h^{[22]}d_{[22]}+h^{[211]}d_{[211]}+(-q)^{-w}d_{[1111]}\right),
\ee
where  $h^{[211]}$ can be obtained from $h^{[31]}$ by $q\to -q^{-1}$.

The 4-strand torus knots $T(4,n)$, for $n$ odd, have character expansion
\be\label{T(4,n)}
\bar Y(T(4,n))=A^{-3 n} (q^{3 n} d_{[4]} - q^{n} d_{[31]} + q^{-n} d_{[211]} - q^{-3 n} d_{[1111]})+A^{-4n}
\ee
where the Racah coefficients $h^{[31]}=-q^{n}$ and $h^{[22]}=0$ are obtained in \cite{Petrou3}. 
The simple correction term $\bar\delta=A^{-4n}$ is similar to the 2-strand case in (\ref{KauffmanCH}).  
For 4-strand torus links, however, corresponding to even $n$, for which the HZ transform is not factorisable, the correction term $\bar\delta$ becomes much more complicated. 

Below we present the correction terms $\bar\delta$ for several examples of 4-strand hyperbolic  knots that admit HZ-factorisability. 
\vskip2mm
\noindent{}$\circ$ Example $10_{132}$. $\sigma_1^{-3}\sigma_2\sigma_1^2\sigma_2\sigma_3\sigma_2^{-1}\sigma_3^2$; $w=3$, $h^{[31]}=-q^{-7}+q^{-5}-q^{-1}+q-q^5$ 
\ba
\bar{\delta}(10_{132})\hspace{-2mm}&=& \hspace{-2mm}
 A^{-8}(q^{-6} + q^{-2} + 1 + q^2 + q^6)
  \nonumber\\\nonumber
&&\hspace{-13mm} - A^{-6}(q^{-8} + q^{-6} + q^{-4} +q^{-2} +2  +q^2 +q^4 +q^6 + q^8)\\
&&\hspace{-13mm}+A^{-4} (q^{-8} + q^{-4} + q^{-2} + 1 + q^2+ q^4 + q^8)- A^{-2}(q^{-2} + q^2) +1.
\ea
\vskip1mm
\noindent{}$\circ$ Example  $10_{132}\otimes F_4$. $w=15$,  $h^{[31]}=-q^{-3} + q^{-1} - q^3 + q^5 - q^9$
\ba\label{10_132F4}
\bar{\delta}(10_{132}\otimes F_4)\hspace{-2mm}&=& \hspace{-2mm}
A^{-14}(q^2+q^{-2})+A^{-18}(q^6+1+q^{-6})+A^{-22}(q^4+1+q^{-4})\nonumber\\
&&\hspace{-2mm}-A^{-16}(q^4+1+q^{-4})-A^{-20}(q^6+q^2+q^{-2}+q^{-6}).
\ea
\vskip1mm
\noindent{}$\circ$  Example $10_{128}$.  $\sigma_1^3\sigma_2\sigma_1^2\sigma_2^2\sigma_3\sigma_2^{-1}\sigma_3$; $w=9$, $h^{[31]}=-q^{-1} + q - q^3 + q^7 - q^9$
\ba\label{10_128}
\bar{\delta}(10_{128})\hspace{-2mm}&=& \hspace{-2mm}
A^{-6}( q^{-4}+1 + q^4) + A^{-8}( - q^{-6} - 2q^{-2}+1 - 2 q^2 - q^6) 
  \nonumber\\
&&\hspace{-10mm}+A^{-10} (  q^{-2}-1 + q^2) + A^{-12} ( q^{-6} - q^{-4} + q^{-2} + q^2 - q^4 + q^6).
\ea
\vskip1mm
\noindent{}$\circ$ Example  $T(4,5,3,3)$.
\be
\bar Y(T(4,5,3,3))=A^{-21} (q^{21} d_{[4]} -q^7 d_{[31]} + q^{-7} d_{[211]} - q^{-21} d_{[1111]})+A^{-28}.
\ee

\noindent{Remark.} For the exceptional 2-component link $L10n_{42}\{1\}$, $\bar\delta$ is  no longer simple (involves 12  different powers of $A$).  This case is exceptional, as explained in \cite{Petrou2}, because its HOMFLY--PT satisfies HZ-factorisability but it is not related to the Kauffman polynomial by the link analogue of (\ref{KFhatINTRO}), as explained in more detail 
in Sec.~\ref{sec:H/KF}. This implies, that it is  the validity of a  HOFMLY--PT/Kauffman, which for two-compoent links is stronger than HZ factorisability,  that determines whether  the correction terms $\bar\delta$ are simple as in the above examples, or not. 

\paragraph{5 strands.} The quantum dimensions for 5-strand are given in Appendix A.
For torus knots the expansion is simply given by ${\mathbb{X}}^{SO}$ (i.e. $\bar\delta=0$), that is
\be\label{boxT(5,n)}
\bar Y(T(5,n))= A^{-4n}(q^{4n}d_{[5]}-q^{2n}d_{[41]}+d_{[311]}- q^{-2n}d_{[2111]}+ q^{-4n}d_{[11111]}).
\ee

\noindent{Remark.}
In general, for \emph{knots}  holds
\be\label{Kchar}
{{\mathbb{X}}^{SU} = {\mathbb{X}}^{SO}_{\rm even}+ \frac{q-q^{-1}}{A-A^{-1}}{\mathbb{X}}^{SO}_{\rm odd}}
\ee
and hence, as anticipated in the previous remark, the HOFMLY--PT/Kauffman relation $H=\widehat{KF}$  implies a condition on the difference terms $\delta$. Hence, in order to study  this relation in more depth, it is necessary to explain the representation theoretic origin of the difference terms, to which we turn in the next section.

\section{BMW algebra representations and the Kauffman character expansion }\label{sec:BMW}
It has been long established \cite{Birman,Murakami1} that the   Kauffman polynomial $KF(\alpha,\beta)$ admits a character expansion, in which the coefficients are determined by representations of the Birman-Murakami-Wenzl  (BMW) algebra $C_m(\al,\beta)$. 
This algebra is generated by the braid group generators $\sigma_i\in B_m$ and by an extra set of generators $e_i$, which is depicted in Fig.~\ref{ei}.
\begin{figure}[h!]
     \centering
\includegraphics[scale=0.2]{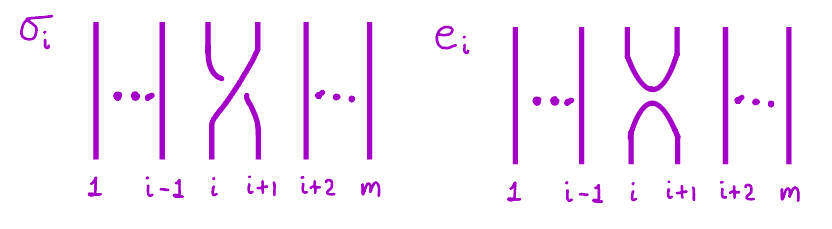}
     \caption{ The BMW algebra generators $\sigma_i$ and $e_i$.} 
     \label{ei}
 \end{figure}
Combinations of the $\sigma_i$ and $e_i$ generators form a \emph{knit} \cite{Murakami1}, which is a generalisation of a braid. 
These generators satisfy the following relations \\(i) the braid relations 
\ba
&&\sigma_i\sigma_{i+1}\sigma_{i}=\sigma_{i+1}\sigma_i \sigma_{i+1}\nonumber\\
&&\sigma_i\sigma_j=\sigma_j\sigma_i ;\;\;|i-j|\geq 2
\ea
  (ii) the Kauffman  skein relation
\be\label{BMWskein}
\sigma_i+\sigma_i^{-1}= (\beta+\beta^{-1})(1+ e_i)\ee 
(iii) the idempotent relation\footnote{The relation (\ref{idempotent}) is reminiscent of a projection operator, which satisfies $P^2=P$, and also appears when there is gauge invariance for a Cartan symmetric (quotient) space 
\cite{ZinnJustin}.} 
\be\label{idempotent}
e_i^2=\mu e_i,\;\;\mu:=\frac{\alpha+\alpha^{-1}}{\beta+\beta^{-1}}-1.\ee
If (\ref{BMWskein}) is considered as the definition of $e_i$, the BMW algebra $C_m(\al,\beta)$ can be  defined as the quotient of $\mathbf{C}[B_m]$ by the relations  \cite{Wenzl}
\be 
e_i\sigma_i = \alpha^{-1}e_i, \;e_i\sigma^{\pm1}_{i-1}e_i = \alpha^{\pm1}e_i\;\forall i.
\ee

The Kauffman polynomial  $KF(x;\alpha,\beta)= \alpha^{-w}L(x;\alpha,\beta)$ of an $m$-strand knit $x$ is expressed 
as {\cite{Birman,Murakami1} (see also Thm.~5.1.5 in \cite{Murakami3}, or Thm.~2.2. in \cite{Murakami2}) 
\be\label{KF-BMW}
KF(x;\alpha,\beta)=(-1)^{m-1}\alpha^{-w}\sum_{j=0}^{\lfloor m/2\rfloor}\sum_{Q,|Q|=m-2j}\chi_{Q}(x;\alpha,\beta)\eta_{Q}(\alpha,\beta)
\ee
where $|Q|$ is the number of boxes of the Young diagram $Q$,
$\chi_{Q}$ are the characters of irreducible representations $\rho_Q$ of the BMW algebra, and\footnote{In the literature, the extra factor $(-1)^{m-1}$ in (\ref{KF-BMW}) is  often absorbed in the definition of $\eta_Q(\al,\beta)$.}
$\eta_{Q}(\al,\beta)=\frac{d_{Q}(i\alpha,-i\beta)}{d_{[1]}(i\alpha,-i\beta)}$ are the normalised quantum dimensions of $SO(N+1)$. In particular, the Kauffman character expansion for a knit with $m$ strands, in contrast with the $SU(N)$ case,  does not involve only Young diagrams with $m$ boxes, but also with $m-2,m-4,...,\epsilon$ boxes, 
where $\epsilon=1$ or $0$, depending on if $m$ is odd or even, respectively. 

\begin{figure}[h!]
    \centering
\includegraphics[width=\linewidth]{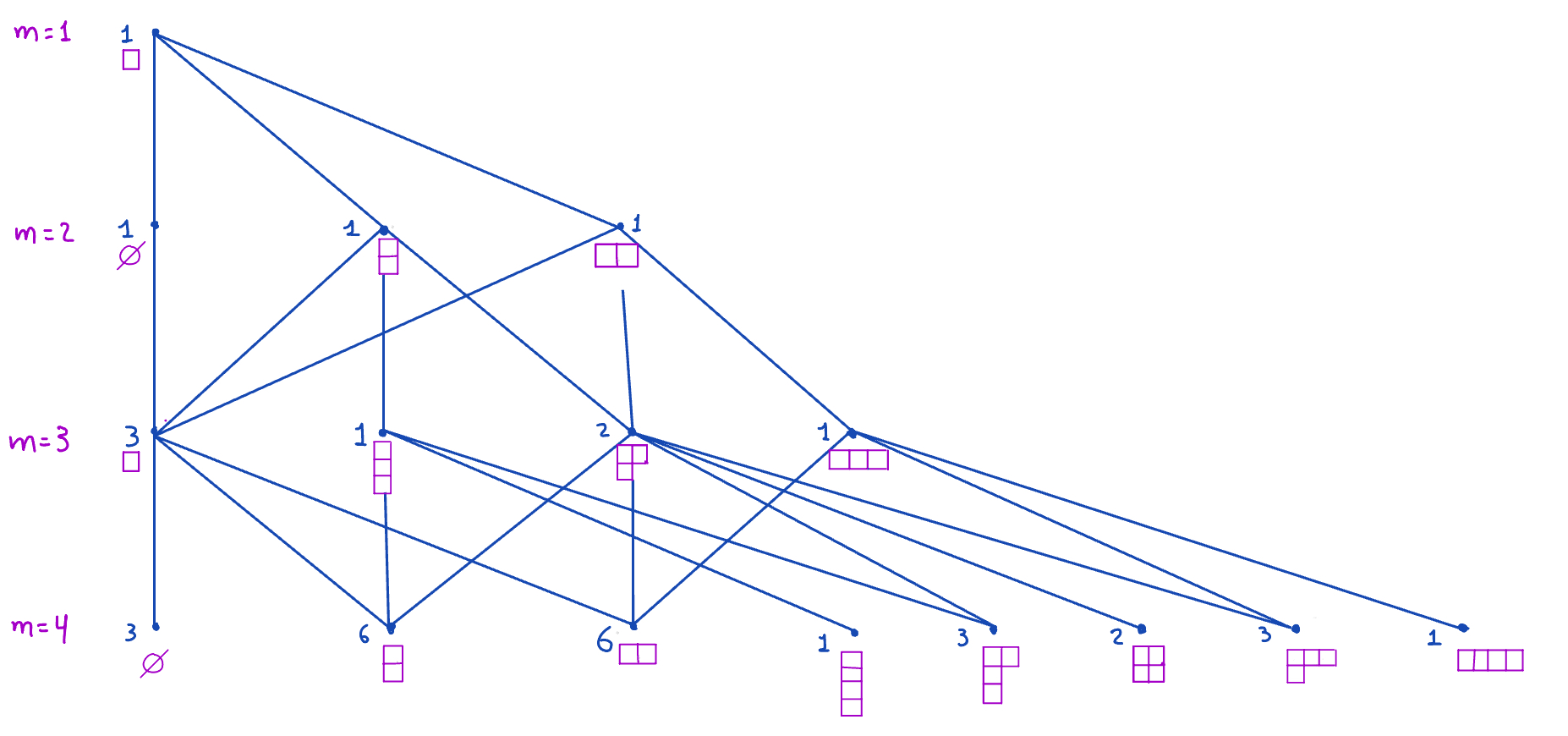}
    \caption{The Bratelli diagram, with the dimensions of the representations of $C_m(\al,\beta)$ indicated next to each Young diagram.}
    \label{fig:Bratelli}
\end{figure}

For fixed number of strands $m$, the dimension of the representation $\rho_Q$ corresponding to a Young diagram $Q$, is given by the number of paths in  the Bratelli diagram, which lead from $Q$ to the diagram $\square$, at $m=1$  \cite{Turaev}. The Bratelli diagram for up to $m=4$  is  depicted in Fig.~\ref{fig:Bratelli}, in which the dimension of each representation that contributes in the $m$-strand expansion is indicated. 
 Remarkably, these dimensions are also reflected in the plethysm formulas 
$d_{[1]}^2= d_{[2]}+d_{[11]}+ d_{[1]}, $ $d_{[1]}^3=d_{[3]}+ 2 d_{[21]}+d_{[111]}+ 3 d_{[1]},$ $
d_{[1]}^4= d_{[4]}+ 3 d_{[31]}+ 2 d_{[22]}+ 3 d_{[211]}+ d_{[1111]}+ 6 d_{[2]}+ 6 d_{[11]}+3$ \cite{Littlewood}.

\paragraph{2 strands.} The character expansion of the Kauffman polynomial  involves a sum over Young diagrams with 2 boxes and with 0 boxes, as indicated in the Bratelli diagram. For a knit $x$ 
it is expressed  by 
\be\label{KF2strand}
{KF(x;\al,\beta)=-\alpha^{-w}(\chi_0(x)\eta_0+\chi_{[2]}(x)\eta_{[2]}+\chi_{[11]}(x)\eta_{[11]}).}
\ee
The normalised quantum dimensions  $\eta_Q(\al,\beta)=\frac{d_{Q}(i\alpha,-i\beta)}{d_{[1]}(i\alpha,-i\beta)}$
are given explicitly by 
\ba\label{KFdim}
\eta_0(\alpha,\beta)&=&\frac{1}{d_{[1]}(i\alpha,-i\beta)}=\frac{\alpha (1+ \beta^2)}{( \beta\alpha-1)( \beta-\alpha)},\nonumber\\
\eta_{[2]}(\alpha,\beta)&=& \frac{(\alpha+\alpha^{-1}) \beta (\alpha \beta^3+1)}{(1-\beta^2)(\alpha \beta-1)( \beta^2+1)},\nonumber\\
\eta_{[11]}(\alpha,\beta)&=&\frac{(\alpha+\alpha^{-1})\beta (\alpha+ \beta^3)}{(\beta^2-1) (\alpha- \beta) (\beta^2+1)}.
\ea
The BMW algebra has 2 generators $\{\sigma_1,e_1\}$ and their 1-dimensional representations, as suggested in the Bratelli diagram in Fig.~\ref{fig:Bratelli}, are 
\be
\rho_0(\sigma_1)=\alpha^{-1}, \;\;\rho_0(e_1)=\mu,\;\;\rho_{[2]}(\sigma_1)=\beta,\;\;\rho_{[11]}(\sigma_1)=\beta^{-1}, \;\;\rho_{1,2}(e_1)=0.
\ee
These are such that to satisfy the defining relations and determine the coefficients for a knit $x$ 
by $\chi_Q(x)=tr(\rho_Q(x))$.

\vskip2mm
As mention in the previous section, the only possible 2-strand knots and links have braid of the form $\sigma_1^n$ and belong to the torus family $T(2,n)$. For these we compute $\chi_0=tr(\rho_0(\sigma_1)^n)=\al^{-n}$, $\chi_{[2]}=tr(\rho_{[2]}(\sigma_1)^n)=\beta^{n}$ and $\chi_{[11]}=tr(\rho_{[11]}(\sigma_1)^n)=\beta^{-n}$ and hence, the 2-strand expansion can be in general expressed as
\be\label{chi(3_1)}
KF(\sigma_1^n;\alpha,\beta)=- \alpha^{-n}(\beta^n \eta_{[2]}+\beta^{-n}\eta_{[11]}+ \alpha^{-n}\eta_0).
\ee
If we set $\alpha=-iA$ and $\beta=iq$, and using that $Y(A,q)=(-1)^{c+1}KF(-iA,iq)$, where $c$ is the number of components, this gives a result consistent with (\ref{KauffmanCH}).

\paragraph{3 strands.}
The Kauffman polynomial  of a 3-strand knit $x$ involves Young diagrams with 3 and 1 boxes, hence it is expressed 
as {\cite{Birman,Murakami1}
\be\label{new}
{KF(x;\alpha,\beta)=\alpha^{-w}(\chi_0(x)+ \chi_{[3]}(x)\eta_{[3]}+\chi_{[21]}(x)\eta_{[21]}+ \chi_{[111]}(x)\eta_{[111]}).}
\ee
Here we used that $\eta_{[1]}=1$ and label $\chi_0:=\chi_{[1]}$, while for $|Q|=3$ the 
 $\eta_Q$ (denoted by 
$a_{1,2,3}$ in \cite{Murakami1}) 
can be written in a factorised form as
\ba\label{ai}
&& \eta_{[3]}= \frac{(\alpha^2+1)\beta^2 (1+ \alpha \beta)(\alpha \beta^5-1)}{\alpha^2  (1+ \beta^2)^2(\beta^2-1)(1-\beta^2+\beta^4)}\nonumber\\
&&\eta_{[21]}= \frac{(1-\beta^2)(\alpha+\beta)(1+\alpha \beta)(\alpha+\beta^3)(1+\alpha \beta^3)}{\alpha^2 \beta^3(1+\beta^2)^2(\beta-\beta^{-1})(\beta^2-1+\beta^{-2})}\nonumber\\
&&
\eta_{[111]}= \frac{(1+ \alpha^2)\beta^2(\alpha+ \beta)(-\alpha+\beta^5)}{\alpha^2  (1+\beta^2)^2(\beta^2-1)(1-\beta^2+\beta^{4})}.
\ea
Note that under $\beta\to\beta^{-1}$,   $\eta_{[3]}$ is symmetric to $\eta_{[111]}$ 
and $\eta_{[21]}$ to itself, 
 in agreement with the reflection symmetries of the respective Young diagrams.
 
The coefficient
$\chi_0$  is obtained as the trace of the 3-dimensional (c.f. Fig.~\ref{fig:Bratelli}) irreducible representations $\rho_0$  
given by \cite{Murakami3}\footnote{\label{foot:r0s2}The $\rho_0(\sigma_i)$  in (\ref{r0rep}) match with the ones in Prop.~4.4.8 of \cite{Murakami3}, with $x=z$. An alternative expression given in \cite{Murakami1} is $\rho_0(\sigma_1)= \begin{pmatrix} 
  \alpha^{-1}&z&0\\
  0&z&1\\
 0&-1&0
 \end{pmatrix}$ and $\rho_0(\sigma_2)= \begin{pmatrix} 
  0&0&-1\\
  0&\alpha^{-1}&\alpha^{-1} z\\
  1&0&z
\end{pmatrix}$, where we have corrected   $\rho_0(\sigma_2)$ 
 to satisfy the defining relation (ii). }  
\ba\label{r0rep}
&&\rho_0(\sigma_1)= \begin{pmatrix} 
  \alpha^{-1}&0&z\\
  0&0&-1\\
  0&1&z
  \end{pmatrix},\hskip 3mm
  \rho_0(\sigma_2)= \begin{pmatrix} 
  0&-1&0\\
  1&z&0\\
  0&\alpha^{-1}z&\alpha^{-1}
\end{pmatrix}\nonumber\\
&&\hspace{7mm}  \rho_0(e_1)= \begin{pmatrix} 
  \mu&\alpha&1\\
  0&0&0\\
  0&0&0
  \end{pmatrix}, \hskip 3mm
  \rho_0(e_2)= \begin{pmatrix} 
  0&0&0\\
  0&0&0\\
1&\al^{-1}&\mu
  \end{pmatrix},
\ea
where $z=\beta+\beta^{-1}$ and $\mu=\frac{\alpha+\alpha^{-1}}{z}-1$.
The remaining coefficients $\chi_Q$ are determined similarly by the representations \cite{Murakami1,Birman} 
\ba\label{matrixrepresentation}
  &&\rho_{[3]}(\sigma_1)= \rho_{[3]}(\sigma_2)=\beta, \hskip 2mm
  \rho_{[3]}(e_1)=
  \rho_{[3]}(e_2)=0\nonumber\\
  &&\rho_{[21]}(\sigma_1)=\begin{pmatrix} 
  \beta&i\\
  0&\beta^{-1}
  \end{pmatrix},\;\;\rho_{[21]}(\sigma_2)=\begin{pmatrix}
  \beta^{-1}&0\\
  i&\beta
  \end{pmatrix},\nonumber\\
  &&\hspace{15mm}\rho_{[21]}(e_1)=\rho_{[21]}(e_2) = \begin{pmatrix} 
  0&0\\
  0&0
  \end{pmatrix}\nonumber\\
  &&\rho_{[111]}(\sigma_1)= \rho_{[111]}(\sigma_2)=\beta^{-1},\hskip 2mm \rho_{[111]}(e_1)=\rho_{[111]}(e_2)=0.
\ea
In particular, for a knit $x=\sigma_1^{r_1}\sigma_2^{s_1}\cdots$, with $r_1,s_1\in\mathbb{Z}$, the coefficients are 
\be\label{coeff3}
\chi_{Q}(x)= \text{tr} (\prod \rho_{Q}(\sigma_1)^{r_1}\rho_{Q}(\sigma_2)^{s_1}\cdots).
\ee 
The  matrices in (\ref{r0rep}) and (\ref{matrixrepresentation})  
satisfy the defining relations (i)-(iii) of the BMW algebra and the further relations, for $i=1,2$,
\ba
&&\text{det}\rho_0(\sigma_{i})= \alpha^{-1}, \hskip 2mm
 \text{tr}\rho_0(\sigma_i)= \alpha^{-1}+z\nonumber\\
&&    K^{(2)} \rho_2(\sigma_1) K^{(2)}= \rho_2(\sigma_2);\;\;K^{(2)}:= \begin{pmatrix}
    0&1\\
    1&0
    \end{pmatrix}\nonumber\\
&&\text{tr}(e_i^n)= \mu^n, \hskip 3mm\text{tr}(e_i \sigma_i)= \alpha^{-1} \mu \hskip 3mm
\text{tr}(e_i \sigma_i^{-1})=\alpha \mu.
    \ea
  
\noindent{Remark.} For $|Q|=m=3$, the coefficients (\ref{coeff3}) depend only on $\beta$, as can be seen in (\ref{matrixrepresentation}), and
 coincide (up to factors $\pm i$) with the  Racah coefficients $h^{Q}$ from the $SU(N)$ expansion at $\beta=iq$, as anticipated in the naive approach of the previous section. Hence, the Racah coefficients can be determined either  by the  real matrices $R_i$ in \cite{Petrou3},  or through the complex matrices $\rho_Q(\sigma_i)$ in (\ref{matrixrepresentation}).  For $|Q|<m$, however, $\chi_{Q}$ depend both on $\alpha$ and $\beta$, as can be seen in (\ref{r0rep}). 
 \begin{prop}\label{prop:delta-x0} The 3-strand correction terms $\delta=\bar\delta/d_{[1]}$ in Sec.~\ref{sec:dubrovnik}
can be obtained in terms of BMW algebra representations by 
 \be\label{boxeddelta}
 {\delta(A,q) = (iA)^{-w}\chi_0(x;-iA,iq)}.
 \ee
  \end{prop}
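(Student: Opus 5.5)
We plan to compare the two available expansions of the same polynomial term by term. The first step is to start from the BMW character expansion (\ref{new}) for a 3-strand braid $x$, $KF(x;\alpha,\beta)=\alpha^{-w}\bigl(\chi_0+\sum_{|Q|=3}\chi_Q\eta_Q\bigr)$. We then substitute $\alpha=-iA$, $\beta=iq$ and use the relation between normalised Kauffman and Dubrovnik polynomials, $Y(A,q)=(-1)^{c+1}KF(-iA,iq)$, exactly as was done after (\ref{chi(3_1)}) in the 2-strand case. For knots $c=1$, so the sign is trivial. For 3-strand links with $c=3$ the sign is again $+1$. For $c=2$ we will need to track the sign together with the $(-1)^{m-1}$ convention in (\ref{KF-BMW}); we expect the footnote to (\ref{KF-Y}) to absorb it, and will check this on $L7n1\{0\}$.

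The second step is to identify the $|Q|=3$ part with $\mathbb{X}^{SO}$. By definition $\eta_Q(\alpha,\beta)=d_Q(i\alpha,-i\beta)/d_{[1]}(i\alpha,-i\beta)$, so at $\alpha=-iA$, $\beta=iq$ we get $\eta_Q=d_Q(A,q)/d_{[1]}(A,q)$. This is exactly the ratio entering (\ref{SO(N)}).

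Next we check the coefficients. By the Remark preceding the proposition, $\chi_Q(x;iq)$ coincides with $h^Q$ up to a factor $\pm i$ (or a power of $i$). We make this precise from (\ref{matrixrepresentation}):
\begin{itemize}
\item For the one-dimensional representations $[3]$ and $[111]$ we have $\chi_{[3]}=\beta^{w}=i^wq^w$ and $\chi_{[111]}=\beta^{-w}=i^{-w}q^{-w}=i^{w}(-q)^{-w}$.
\item For $[21]$ the matrices at $\beta=iq$ equal $i$ times real matrices conjugate to the $R$-matrices of \cite{Petrou3} with eigenvalues $q,-q^{-1}$. Hence a word with exponent sum $w$ gives $\chi_{[21]}=i^{w}h^{[21]}$.
\end{itemize}
In all three cases the prefactor is uniformly $i^w$. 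Combined with $\alpha^{-w}=(-iA)^{-w}=i^{w}A^{-w}$, this gives $\alpha^{-w}\chi_Q\eta_Q=(i^{w})^2A^{-w}h^Q d_Q/d_{[1]}=(-1)^wA^{-w}h^Qd_Q/d_{[1]}$. For knots $w$ is even on 3 strands, so the $|Q|=3$ part reproduces $\mathbb{X}^{SO}$ exactly. For links we must combine the $(-1)^w$ with the sign $(-1)^{c+1}$ above. We expect the parity relation $w\equiv m-c \pmod 2$ for an $m$-strand braid closure to make the total sign $+1$, since $w\equiv c+1$ and $m=3$ is odd.

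Finally, subtracting this identification from $Y=\mathbb{X}^{SO}+\delta$ gives $\delta=\alpha^{-w}\chi_0|_{\alpha=-iA,\beta=iq}$, with the sign-adjusted prefactor $(iA)^{-w}$. The claim $(-iA)^{-w}\cdot(-1)^w=(iA)^{-w}$ closes the formula (\ref{boxeddelta}). Dividing $\bar\delta$ by $d_{[1]}$ is consistent because (\ref{new}) is normalised with $\eta_{[1]}=1$.

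We expect the main obstacle to be the bookkeeping of the powers of $i$ and signs. This covers the conjugation showing that $\rho_{[21]}(\sigma_i)|_{\beta=iq}$ is $i$ times the $SU(N)$ $R$-matrices, which requires an explicit similarity transform (e.g.\ a diagonal rescaling that removes the off-diagonal $i$), and the interplay of $(-1)^{c+1}$, $(-1)^{m-1}$ and $(-1)^w$. As a sanity check we will verify the formula on $T(3,n)$, where $\chi_0$ of $(\sigma_2\sigma_1)^n$ should vanish for $3\nmid n$ by direct computation with (\ref{r0rep}), and equal $3\alpha^{-2n}\cdot(\ldots)$ for $n=3k$, matching (\ref{KT(3,n)CH}).
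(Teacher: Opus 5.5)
Your proposal is correct and follows the paper's own proof: you compare (\ref{new}) at $\alpha=-iA$, $\beta=iq$ directly with $Y=\mathbb{X}^{SO}+\delta$, using $Y=(-1)^{c+1}KF(-iA,iq)$, the parity $c\equiv w+1$ for 3-strand closures, and $\chi_Q(iq)=i^wh^Q(q)$, and you justify that last identity more explicitly than the paper does (also, for $m=3$ the factor $(-1)^{m-1}$ is just $1$, so there is nothing for the footnote to absorb). One small slip is in your sanity check: for $n=3k$ one gets $\chi_0=\mathrm{tr}\,\rho_0(F_3^k)=3\alpha^{-2k}$, not $3\alpha^{-2n}\cdot(\ldots)$, and this correctly gives $\bar\delta=3A^{-8k}d_{[1]}$.
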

  \begin{proof}
     By direct comparison of the Dubrovnik character expansion $Y(A,q)= {\mathbb{X}}^{SO}(A,q)+ \delta(A,q)$ with (\ref{new}), using the relation $Y(A,q)=(-1)^{w}KF(-iA,iq)$ from (\ref{KF-Y}), since the parity of $c$ coincides with $w+1$, and that $\chi_Q(iq)=i^wh^Q(q)$. 
  \end{proof}
 \begin{cor}
The character expansion of the unnormalised Dubrovnik polynomial for a 3 strand knit can be expressed as
 \be\label{Thm5.1}
 \bar{Y}(A,q)=A^{-w}\left(\sum_Qh^Q(q)d_Q(A,q)\pm d_{[1]}(A,q)\chi_0(-iA,iq)\right).
 \ee
 \end{cor}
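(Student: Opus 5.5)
The plan is to multiply the normalised identity of Prop.~\ref{prop:delta-x0} by $d_{[1]}(A,q)$ and then identify each term with the naive expansion of Sec.~\ref{sec:dubrovnik}. Write the decomposition $\bar Y=\bar{\mathbb{X}}^{SO}+\bar\delta$ from (\ref{naivebarY}). By definition, $\bar{\mathbb{X}}^{SO}=d_{[1]}\mathbb{X}^{SO}=A^{-w}\sum_Q h^Q(q)\,d_Q(A,q)$, using (\ref{SO(N)}). This already gives the first summand of (\ref{Thm5.1}). It remains to treat the correction term $\bar\delta=d_{[1]}\delta$.

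For the correction term I will invoke Prop.~\ref{prop:delta-x0}, which gives
\[
\bar\delta(A,q)=d_{[1]}(A,q)\,(iA)^{-w}\chi_0(x;-iA,iq)=A^{-w}\,i^{-w}\,d_{[1]}(A,q)\,\chi_0(x;-iA,iq).
\]
Factoring out $A^{-w}$ then leaves the prefactor $i^{-w}$, and the whole content of the corollary is that this prefactor is a sign.

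The only real step is therefore justifying $i^{-w}=\pm1$, which requires $w$ to be even. Here I will use the fact already exploited in the proof of Prop.~\ref{prop:delta-x0}: for a 3-strand braid (or knit) closure, the parity of the number of components $c$ equals that of $w+1$. For a knot ($c=1$) this forces $w$ even, so $i^{-w}=(-1)^{w/2}$, which is the $\pm$ in (\ref{Thm5.1}).

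The main obstacle is the odd-$w$ case, where $c$ is even (links) and $i^{-w}=\pm i$. I see two ways to handle it. The first is to absorb the phase into the convention $Y=(-1)^{c+1}KF(iA,-iz)$ from the footnote to (\ref{KF-Y}), noting that $\chi_0$ is homogeneous of degree $w$ under the joint rescaling by $i$, so that it can be rewritten with real coefficients. The second is simply to state the corollary with the sign understood as $\pm$ (respectively $\pm i$ for links). I would first check on $T(3,3k)$ via (\ref{KT(3,n)CH}), where $\chi_0$ should reproduce $3A^{-2k}$ with a definite sign, and use that example to fix the sign convention.
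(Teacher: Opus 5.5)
Your argument is the paper's: the corollary is Prop.~\ref{prop:delta-x0} multiplied by $d_{[1]}$, with $\bar{\mathbb{X}}^{SO}=A^{-w}\sum_Q h^Q d_Q$ and the $\pm$ equal to $i^{-w}=(-1)^{w/2}$ for even $w$ (i.e.\ for knots), which is how the paper later uses it in Prop.~\ref{HKF3x0}.

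One caveat on your optional remark about links: $\chi_0$ is not homogeneous under $(\alpha,\beta)\mapsto(i\alpha,i\beta)$ (for $5_2$ it contains both $\alpha^3\beta^{-3}$ and $\alpha^3\beta^{-1}$). For a braid word the actual symmetry is $\chi_0(-\alpha,-\beta)=(-1)^w\chi_0(\alpha,\beta)$, which makes $\chi_0(-iA,iq)$ purely imaginary when $w$ is odd, so your $\pm i$ version is the correct statement in that case.
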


\vskip2mm
\noindent{$\circ$} Example $3_1$. For a 3-strand braid $ x=(\sigma_1\sigma_2)^2$  with $w=4$  we find
\be
\chi_{[3]}(x)= \beta^4, \hskip 2mm
\chi_{[21]}(x) = -1,\hskip 2mm
\chi_{[111]}(x)=\beta^{-4}
\ee
which, with $\beta=iq$, are the same as the $SU(N)$
Racah coefficients $h^{Q}(3_1)$.
The coefficient  $\chi_0(x)=\text{tr}\left((\rho_0(\sigma_1)
\rho_0(\sigma_2))^2\right)=0$, is  vanishing. 
The Kauffman polynomial $KF(\alpha,\beta)$,  with $z=\beta+\beta^{-1}$, hence becomes 
\ba\label{KF31}
KF(\alpha,\beta)&=&\al^{-4}(\eta_{[3]} \beta^4- \eta_{[21]}+ \beta^{-4} \eta_{[111]})\nonumber\\
&=&\alpha^{-4}(1-z^2)+ \alpha^{-2}(2-z^2)
-(\alpha^{-5}+\alpha^{-3})z,
\ea
in agreement with (\ref{K(2,3)}), after $\al=-iA,\beta=iq$. 
Alternatively, $3_1$ can be thought of as the closure of the knit
$e_1 \sigma_2^3 \sigma_1$, which has writhe\footnote{Note that for knits, the writhe is not the sum of the exponents of $\sigma_i$ and $e_i$. This can be understood by the fact that $e_i$ reverses the orientation of a strand. To determine the writhe of a knit, it is hence necessary to draw the oriented braid diagram and count the signed crossings.} $w=2$ and for which
\be
\chi_0(3_1)=\text{tr}(e_1 \sigma_2^3 \sigma_1)= -2+ z^2 + \alpha^{-2}(z^2-1)+\alpha^{-1} z + \alpha^{-3} z
\ee
This yields the Kauffman polynomial as $KF(\al,\beta)=\al^{-w}\chi_0$ \cite{Murakami1}.

\vskip 2mm
\noindent{$\circ$} Example  $T(3,n)$. 
The Kauffman polynomial  for 3-strand torus knots is determined by (\ref{new}) with $\chi_0(x)=\text{tr}(\rho_0(\sigma_1)\rho_0(\sigma_2))^{n}=0.$ By Prop.~\ref{prop:delta-x0}, this is equivalent to the statement that the Kauffman polynomial of $T(3,n)$ is given by ${\mathbb{X}}^{SO}$ without correction terms ($\delta=0$).

\vskip 2mm
\noindent{\bf $\circ$}
Example $5_2$.
With braid $x=\sigma_1^{-3}\sigma_2^{-1}\sigma_1\sigma_2^{-1}$, $w=-4$, we compute $\chi_{[3]}(x)= \beta^{-4}$
, 
 $\chi_{[21]}(x)= -(\beta^4+\beta^2+ 1+ \beta^{-2}+\beta^{-4})$  (in agreement with $h^{[21]})$ 
and $\chi_0=  (\beta^{-3} + \beta^{-1} + \beta + \beta^3)\al^{3} + ( \beta^{-4} + 2\beta^{-2} +2+ 2 \beta^2 + \beta^4)\al^{2}+ (
 \beta^{-3} + 2\beta^{-1} + 2 \beta + \beta^3)\al +\beta^{-2} +2+ \beta^2 +  (\beta^{-1} + \beta)\al^{-1} $. 
 In agreement with Prop.~\ref{prop:delta-x0},  $A^{4}d_{[1]}(A,q)\chi_0(-iA,iq)$ yields the difference term $\bar\delta(5_2)$, given in Table~\ref{tab:difference}.  
Alternatively, the knot $5_2$ can be expressed as the closure of the knit $e_1\sigma_2^3\sigma_1^{-1}\sigma_2$.

 \vskip 2mm
\noindent{\bf $\circ$}
 Example $4_1$: The figure-8 knot, is the simplest non HZ-factorisable knot. With braid 
$x=(\sigma_1\sigma_2^{-1})^2$, $w=0$, we compute $\chi_{[3]}= 1$ and 
 $\chi_{[21]}(x)=\text{tr}((\rho_{[21]}(\sigma_1)(\rho_{[21]}(\sigma_2))^{-1})^2)= \beta^4+ 2 \beta^2 + 1 + 2\beta^{-2} + \beta^{-4}$, which agrees with $h^{[21]}(4_1;\beta=i q)$ 
and   $\chi_0=   
  (\beta^{-2}+2 + \beta^2)\al^2 +  
  (2\beta^{-3} + 4\beta^{-1} + 4 \beta + 2 \beta^3)\al+\beta^{-4} + 4\beta^{-2}+6 + 4 \beta^2 + \beta^4 + (2\beta^{-3} + 4\beta^{-1} + 4 \beta + 2 \beta^3)\al^{-1}+( \beta^{-2} + 2+\beta^2)\al^{-2} $. 
By (\ref{new}) these yield $KF(4_1;\alpha,z=\beta+\beta^{-1})$ (c.f. \cite{Bar-Natan}). 

\begin{prop}\label{prop:F3E3matrices}
Full twists $F_3^k:=(\sigma_2\sigma_1)^{3k}$ are central elements of the BMW algebra. Their matrix representations  read
\ba\label{r0F3}
&&\hspace{13mm}\rho_{[21]}(F_3^k)=(-1)^k I_2,\nonumber\\
 && \rho_0(F_3^k)=(\rho_0(\sigma_2)\rho_0(\sigma_1))^{3k}=\al^{-2k}I_3,\ea
 where $I_n$ is the  $n\times n$ identity matrix.
For Jucys-Murphy twists 
$E_3:=\sigma_1\sigma_2^2\sigma_1$, the BMW algebra representations are
\ba\label{r2E3}
&&\hspace{9mm}\rho_{[21]}(E_3^k)=(-1)^k \begin{pmatrix} 
  \beta^{2k}&0\\
  -i\sum_{j=0}^{2k-1}\beta^{2(k-j)-1}&\beta^{-2k}
  \end{pmatrix},\nonumber\\
&&\hspace{-7mm}\rho_0(E_3^k)=\al^{-2k}\left(
\begin{array}{ccc}
 \sum_{i=0}^{2k}\beta^{2(k-i)} &
   \sum_{i=0}^{2k-1}\beta^{2(k-i)-1} & 0 \\
 -\sum_{i=0}^{2k-1}\beta^{2(k-i)-1}  &
   -\sum_{i=0}^{2(k-1)}\beta^{2(k-i-1)}  & 0 \\
 x &  y & \al^{2k} \\
\end{array}
\right),
\ea
where we define $x:=\sum_{i=0}^{2k-1}\al^i(\beta^{2k-i}+2\sum_{r=0}^{2k-2-i}\beta^{2(k-1-r)-i}+\beta^{-2k+i})$ and $y:=\sum_{i=0}^{2k-2}\al^i(\beta^{2k-1-i}+2\sum_{r=0}^{2k-3-i}\beta^{2k-3-i-2r}+\beta^{-2k+1+i})$. 
For partial full twists of the form $\sigma_2^{2k}$
\ba\label{r2F2}
&&\hspace{9mm}\rho_{[21]}(\sigma_2^{2k})= \begin{pmatrix} 
  \beta^{-2k}&0\\
  i\sum_{j=0}^{2k-1}\beta^{2(k-j)-1}&\beta^{2k}
  \end{pmatrix},\nonumber\\
&&\hspace{-5mm}\rho_0(\sigma_2^{2k})=\left(
\begin{array}{ccc}
- \sum_{i=0}^{2(k-1)}\beta^{2(k-1-i)} &
  - \sum_{i=0}^{2k-1}\beta^{2(k-i)-1} & 0 \\
 \sum_{i=0}^{2k-1}\beta^{2(k-i)-1}  &
   \sum_{i=0}^{2k}\beta^{2(k-i)}  & 0 \\
 x' &  y' & \al^{-2k} \\
\end{array}
\right),
\ea
in which we define  $x':=\sum_{i=0}^{2k-2}\al^{-1-i}(\beta^{2k-1-i}+2\sum_{r=0}^{2k-3-i}\beta^{2k-3-i-2r}+\beta^{-2k+1+i})$ and $y':=\sum_{i=0}^{2k-1}\al^{-1-i}(\beta^{2k-i}+2\sum_{r=0}^{2k-2-i}\beta^{2(k-1-r)-i}+\beta^{-2k+i})$. Note that $\rho_Q(F_3^k)$, $\rho_Q(E_3^k)$ and $\rho_Q(\sigma_2^{2k})$ commute with each other. 
\end{prop}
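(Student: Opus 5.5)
The argument is direct matrix computation with the explicit representations (\ref{r0rep}) and (\ref{matrixrepresentation}), organised so that only one or two products and an induction on $k$ are needed.

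\emph{Centrality of $F_3$.} Centrality is a statement in the braid group image: $\Delta^2=(\sigma_2\sigma_1)^3$ is central in $B_3$, so it commutes with $\sigma_1,\sigma_2$. It remains to check that it commutes with $e_1,e_2$. By the skein relation (\ref{BMWskein}), $e_i=(\beta+\beta^{-1})^{-1}(\sigma_i+\sigma_i^{-1})-1$ is a polynomial in $\sigma_i^{\pm1}$, so $e_i$ commutes with $F_3$ as well. Hence $F_3^k$ is central in $C_3(\al,\beta)$. By Schur's lemma, $F_3^k$ then acts as a scalar in each irreducible representation.

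\emph{Identifying the scalars.} To find the scalar I would compute one entry. For $\rho_{[21]}$, one multiplication gives $\rho_{[21]}(\sigma_2\sigma_1)=\begin{pmatrix}1&i\beta^{-1}\\ i\beta & 0\end{pmatrix}$, which has trace $1$ and determinant $1$. Its eigenvalues are therefore primitive sixth roots of unity, so its cube is $-I_2$ and $\rho_{[21]}(F_3^k)=(-1)^kI_2$. For $\rho_0$, I would form the $3\times3$ product $M=\rho_0(\sigma_2)\rho_0(\sigma_1)$. Since $\det\rho_0(\sigma_i)=\al^{-1}$, we get $\det M^3=\al^{-6}$, which is consistent with the scalar being $\al^{-2}$. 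I would confirm this either by computing $M^3$ directly or by reading off a diagonal entry: the third row of $\rho_0(\sigma_2)$ and the block structure should give $(M^3)_{33}=\al^{-2}$ quickly. Centrality then forces $M^3=\al^{-2}I_3$.

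\emph{Jucys--Murphy twists and $\sigma_2^{2k}$.} For $\rho_{[21]}$, I would compute $\rho_{[21]}(E_3)=\rho_{[21]}(\sigma_1)\rho_{[21]}(\sigma_2)^2\rho_{[21]}(\sigma_1)$ explicitly. It comes out lower triangular with diagonal $-\beta^{2},-\beta^{-2}$. The $k$-th power of a lower triangular $2\times2$ matrix has off-diagonal entry $c\sum_j \lambda_1^{k-1-j}\lambda_2^{j}$, which yields the stated geometric sum; the same applies to $\rho_{[21]}(\sigma_2^{2k})$ starting from $\rho_{[21]}(\sigma_2)^2$. For $\rho_0$, the key observation is that $\rho_0(\sigma_2)$ and $\rho_0(\sigma_1\sigma_2^2\sigma_1)$ have block lower-triangular form, with an upper-left $2\times2$ block and third column $(0,0,*)^T$. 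In that form the $(3,3)$ entry is $\al^{\mp 2k}$ and the upper-left block powers by the same $2\times2$ Chebyshev-type recursion. I would then prove the formulas by induction on $k$, multiplying by $\rho_0(E_3)$ or $\rho_0(\sigma_2^2)$ once more and checking that the sums shift correctly.

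\emph{Main obstacle and commutativity.} The hard step is the bottom-row entries $x,y$ and $x',y'$. Their recursion is $v_{k+1}=v_kB+\al^{\mp2k}c$, where $B$ is the block and $c$ the new row. I would verify it by comparing coefficients of each power $\al^i$, which reduces to telescoping identities of $\beta$-geometric sums. The final commutativity claim follows because $F_3^k$ is scalar, and because $E_3=\sigma_1\sigma_2^2\sigma_1$ commutes with $\sigma_2^2$. That last fact is a braid-group identity: $E_3=F_3\sigma_2^{-2}$ up to the relation $(\sigma_1\sigma_2)^3=\sigma_1\sigma_2^2\sigma_1\sigma_2^2$, i.e. $E_3\sigma_2^2=F_3$ with $F_3$ central.
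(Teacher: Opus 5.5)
Your proposal is correct. For the explicit matrices it does what the paper does: the paper's entire proof is ``direct computation using (\ref{r0rep}) and (\ref{matrixrepresentation})''. The difference is that you replace brute force by structure in two places.

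\begin{itemize}
\item \textbf{Centrality.} You prove it in the abstract algebra $C_3(\alpha,\beta)$, independently of any representation. You use that $(\sigma_2\sigma_1)^3$ is central in $B_3$ and that $e_i=(\beta+\beta^{-1})^{-1}(\sigma_i+\sigma_i^{-1})-1$ is a function of $\sigma_i^{\pm1}$.
\item \textbf{Commutativity.} You obtain it from the braid identity $E_3\sigma_2^2=(\sigma_1\sigma_2)^3=F_3$ rather than by multiplying matrices.
\end{itemize}

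Your induction on $k$ also supplies the all-$k$ argument that the paper's one-line proof leaves implicit.

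Two shortcuts are available.

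\emph{Schur's lemma is not needed.} It requires irreducibility, hence generic parameters, and you can avoid it. One multiplication gives
$$\rho_0(\sigma_2)\rho_0(\sigma_1)=\begin{pmatrix}0&0&1\\ \alpha^{-1}&0&0\\ 0&\alpha^{-1}&0\end{pmatrix},$$
a monomial matrix whose cube is visibly $\alpha^{-2}I_3$. For $\rho_{[21]}$, your trace/determinant argument is just Cayley--Hamilton: $M^2=M-I$, so $M^3=-I$.

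\emph{The second bottom-row induction is not needed.} Your identity $E_3=F_3\sigma_2^{-2}$ gives $\rho_Q(E_3^k)\rho_Q(\sigma_2^{2k})=\rho_Q(F_3^k)$. So once the $\sigma_2^{2k}$ formulas are established, the $E_3^k$ formulas follow by inverting a block lower-triangular matrix, or by checking one product against $(-1)^kI_2$, respectively $\alpha^{-2k}I_3$.

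A small clarification if you do keep the $E_3$ induction. The actual entry $\rho_0(E_3)_{33}$ is $1$. The $\alpha^{2k}$ you quote is the $(3,3)$ entry of the bracketed matrix after the prefactor $\alpha^{-2k}$ is pulled out, and your recursion is correct only for that normalised matrix.
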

\begin{proof}
    By direct computation using (\ref{r0rep}) and (\ref{matrixrepresentation}).
\end{proof}
The different version of the Jucys-Murphy braid $\tilde{E}_3:=\sigma_2\sigma_1^2\sigma_2$ appeared to have simpler (diagonal) representations in the $SU(N)$ case, as compared to $E_3$ \cite{Petrou3}. However, in the $SO(N+1)$ case, $\rho_Q(\tilde{E}_3^k)$ does not simplify, but its  entries only get reshuffled as compared to (\ref{r2E3}). It is noteworthy that 
full twists can be constructed as combinations of partial full twists and Jucys-Murphy twists \cite{Petrou3}. For example, they can be expressed as 
    \be
    F_m=(\sigma_{m-1}\cdots\sigma_2)^{m-1}\otimes E_m=(\sigma_{m-2}\cdots\sigma_1)^{m-1}\otimes\tilde{E}_m,
    \ee
 where $\tilde{E}_m=\sigma_m\sigma_{m-1}\cdots\sigma_2\sigma_1^2\sigma_2\cdots\sigma_m$, or alternatively, as products of Jucys-Murphy twists alone, applied on decreasing number of strands as 
\be
F_m=E_m\otimes E_{m-1}\otimes\cdots\otimes E_2,
\ee
where $E_{i}$ for $i<m$, involves the last $i$ number of strands.

\paragraph{4 strands.} The character expansion for 4 strand knots 
involves Young diagrams with $4,2$ and $0$ boxes and can be written as
\be\label{new4}
{KF(x;\alpha,\beta)=-\alpha^{-w}\left(\chi_0(x)\eta_{0}+ \sum_{Q ,|Q|=2}\chi_{Q}(x)\eta_Q+ \sum_{Q ,|Q|=4}\chi_{Q}(x)\eta_{Q}\right),}
\ee
where $\eta_Q=\frac{d_{Q}(i\al,-i\beta)}{d_{[1]}(i\al,-i\beta)}$, which for $|Q|=0,2$ are given in 
(\ref{KFdim}) while 
for $|Q|=4$ 
they become 
\ba
&&\eta_{[4]}(\alpha,\beta)=\frac{\beta^3(\alpha^2+1)(\alpha \beta+1)(\alpha^2 \beta^4+1)(\alpha \beta^7+1)}{\alpha^3 (\beta^2+1)^3 (\beta^2-1)^2 (\beta^4+1) (-1+ \beta^2- \beta^4)}\nonumber\\
&&\eta_{[31]}(\alpha,\beta)=\frac{\beta(\alpha^2+1)(\alpha+\beta)(\alpha^2 \beta^4+1)(\alpha+ \beta^3)(\alpha \beta^5-1)}{\alpha^3 (\beta^2+1)^3 (\alpha \beta-1)(\beta^2-1)^2 (\beta^4+1)}\nonumber\\
&&\eta_{[22]}(\alpha,\beta)=\frac{(\alpha+\beta^3)(\alpha \beta^3+1)(\alpha^2+ \beta^4)(\alpha^2\beta^4+1)}{\alpha^3 (\beta^2+1)^3 (\beta^2-1)^2 (-1+\beta^2-\beta^4)}\nonumber\\
&&\eta_{[211]}(\alpha,\beta)=\frac{\beta(\alpha^2+1) (\alpha \beta+1)(\alpha^2 + \beta^4))(\alpha \beta^3+1)(\alpha- \beta^5)}{\alpha^3 (\beta^2+1)^3 (\alpha - \beta)(\beta^2-1)^2 (\beta^4+1)}\nonumber\\
&&\eta_{[1111]}(\alpha,\beta)=\frac{\beta^3(\alpha^2+1)(\alpha +\beta)(\alpha ^2+ \beta^4)(\alpha+ \beta^7)}{\alpha^3 (\beta^2+1)^3 (\beta^2-1)^2 (\beta^4+1) (-1+ \beta^2- \beta^4)}.
\ea

As before, when $|Q|=4$ the coefficients  can be obtained from the Racah coefficients  as $\chi_Q(\beta)=i^wh^Q(q)|_{q=-i\beta}$, 
and hence can be computed using 
the $R-$matrices in \cite{Petrou3}. 
The
matrix representations corresponding to Young diagrams with 2 boxes are 6 dimensional (c.f. Fig.~\ref{fig:Bratelli}), for which explicit expressions,
 as given  by Birman and Wenzl in 
\cite{Birman}, are 
\ba\label{4strandsigma}
\rho_{[11]}(\sigma_1)&=& \begin{pmatrix} 
  \beta^{-1}&0&0&0&0&0\\
0&0&0&0&-1&0\\
0&0&0&0&0&-1\\
0&0&0&\alpha^{-1}&\alpha^{-1}z&\alpha^{-1}\beta z\\
 0&1&0&0&z&0\\
 0&0&1&0&0&z
 \end{pmatrix},\nonumber\\
 \rho_{[11]}( \sigma_2)&=& \begin{pmatrix} 
  0&0&-1&0&0&0\\
0&\alpha^{-1}&\alpha^{-1}z&z&0&0\\
1&0&z&0&0&0\\
0&0&0&z&1&0\\
 0&0&0&-1&0&0\\
 0&0&0&0&0&\beta^{-1}
 \end{pmatrix},\nonumber\\
 \rho_{[11]}(\sigma_3)&=& \begin{pmatrix} 
  \alpha^{-1}&z&0&0&\beta^{-1}z&0\\
0&z&1&0&0&0\\
0&-1&0&0&0&0\\
0&0&0&\beta^{-1}&0&0\\
 0&0&0&0&z&1\\
 0&0&0&0&-1&0
 \end{pmatrix}.
 \ea
Notably, these matrix representations 
consist of 3 block matrices of dimensions 3,2 and 1, as suggested by the Bratelli diagram.  
The blocks, consisting of    a $3\times3$ matrix in the upper left corner of $\rho_{[11]}(\sigma_3)$ and $\rho_{[11]}(\sigma_2)$ are  the same as $\rho_0 (\sigma_1)$  
and   $\rho_0(\sigma_2)$ in footnote~\ref{foot:r0s2}, respectively. These matrices satisfy 
$\text{det}(\rho_{[11]}(\sigma_i))=\alpha^{-1}\beta^{-1}$, $\text{det}(\rho_{[11]}(\sigma_i)^{-1})= \alpha \beta$, $\text{tr}(\rho_{[11]}(\sigma_i))=\alpha^{-1}+\beta^{-1}+ 2 z$, for $i=1,2,3$,  and $\text{tr}(\rho_Q(\sigma_3)\rho_Q(\sigma_2)\rho_Q(\sigma_1))=0.$
From the defining relation (ii) $\sigma_i+\sigma_i^{-1}= z (1+ e_i)$,
the matrices $\rho_{[11]}(e_i)$ become
\ba
&&\hspace{-6mm} \rho_{[11]}(e_1)= \begin{pmatrix} 
  0&0&0&0&0&0\\
0&0&0&0&0&0\\
0&0&0&0&0&0\\
0&1&\beta&\mu&\alpha^{-1}&\alpha^{-1}\beta \\
 0&0&0&0&0&0\\
 0&0&0&0&0&0
 \end{pmatrix},\hskip1mm
 \rho_{[11]}(e_2)= \begin{pmatrix} 
  0&0&0&0&0&0\\
1&\mu&\alpha^{-1}&1&\alpha&0\\
0&0&0&0&0&0\\
0&0&0&0&0&0 \\
 0&0&0&0&0&0\\
 0&0&0&0&0&0
 \end{pmatrix}\nonumber\\
 &&\hspace{20mm}\rho_{[11]}(e_3)= \begin{pmatrix} 
  \mu&1&\alpha&0&\beta^{-1}&\alpha \beta^{-1}\\
0&0&0&0&0&0\\
0&0&0&0&0&0\\
0&0&0&0&0&0 \\
 0&0&0&0&0&0\\
 0&0&0&0&0&0
 \end{pmatrix}
 \ea
 where $\mu=(\alpha+\alpha^{-1})/z-1$. These also satisfy the braid relations, the idempotent relation (\ref{idempotent}) and, in agreement with $KF(\bigcirc^n)=\mu^n$,  $
\text{tr}(\rho_{[11]}(e_i)^n)= \mu^n$, for $ n\in \mathbb{Z}$. Moreover, they satisfy the further properties  $\text{tr}(e_1e_2)=\text{tr}(e_2 e_3)=1$ and $\text{tr}(e_1 \sigma_1)= \alpha^{-1}\mu, $ and $
\text{tr}(e_1 \sigma_1^{-1})= \alpha \mu.$

 The representations for  $\rho_{[2]}(\sigma_i)$ are obtained from the matrices $ \rho_{[11]}(\sigma_i)$ by $\beta\mapsto \beta^{-1}$ (or equivalently $q\mapsto-q^{-1}$).
The representations corresponding to the  Young diagram with 0 boxes $\rho_0(\sigma_i)$ are 3 dimensional, and they are given by (\ref{r0rep}) for $i=1,2$, while $\rho_0(\sigma_3)=\rho_0(\sigma_1)$.

 \begin{prop} For a 4-strand knit  the extra term in the character expansion of the unnormalised Dubrovnik polynomial $\bar{Y}(A,q)=A^{-w}\sum_Qh^Q(q)d_Q(A,q)+\bar\delta(A,q)$ reads
 \be\label{delta4}
 \bar\delta(A,q)=(iA)^{-w}\left( \chi_0(-iA,iq)+ \sum_{Q , |Q|=2}\chi_{Q}(-iA,iq)d_{Q}(A,q)\right)
 \ee 
  \end{prop}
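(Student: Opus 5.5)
The plan is to mirror the proof of Prop.~\ref{prop:delta-x0}: compare the BMW character expansion (\ref{new4}) with the Dubrovnik expansion term by term, after the substitution $\alpha=-iA$, $\beta=iq$. Three ingredients are needed: the relation between $KF$ and $Y$, the conversion of the normalised dimensions $\eta_Q$ into $d_Q/d_{[1]}$, and the identification of the $|Q|=4$ characters with the Racah coefficients.

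\textbf{Step 1: from $KF$ to $\bar Y$.} Start from (\ref{new4}), evaluated at $(\alpha,\beta)=(-iA,iq)$. Since $\eta_Q(\alpha,\beta)=d_Q(i\alpha,-i\beta)/d_{[1]}(i\alpha,-i\beta)$, this substitution gives $i\alpha=A$ and $-i\beta=q$. Hence $\eta_Q(-iA,iq)=d_Q(A,q)/d_{[1]}(A,q)$, and for $|Q|=0$ we get $\eta_0=1/d_{[1]}$. Next use $Y(A,q)=(-1)^{c+1}KF(-iA,iq)$ together with $\bar Y=d_{[1]}Y$. For a 4-strand braid the parity of $c$ is fixed by the parity of $w$: a 4-cycle permutation is odd and has even writhe with $c=1$, and in general $(-1)^{c}=(-1)^{w+m}$. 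This combines the sign $(-1)^{c+1}$ with the overall $-1$ in (\ref{new4}). Finally $\alpha^{-w}=(-iA)^{-w}$, and $(-i)^{-w}$ differs from $i^{-w}$ by $(-1)^w$. We then obtain
\[
\bar Y(A,q)=\pm(-iA)^{-w}\Big(\chi_0+\sum_{|Q|=2}\chi_Q d_Q+\sum_{|Q|=4}\chi_Q d_Q\Big)\Big|_{\alpha=-iA,\beta=iq}.
\]

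\textbf{Step 2: the top-level terms.} Use the fact stated above (\ref{4strandsigma}): for $|Q|=4$, $\chi_Q(\beta)=i^w h^Q(q)|_{q=-i\beta}$. Evaluated at $\beta=iq$ this gives $\chi_Q=i^w h^Q(q)$, so that $(iA)^{-w}\chi_Q=A^{-w}h^Q$. The $|Q|=4$ sum therefore reproduces exactly $A^{-w}\sum_Q h^Q d_Q=\bar{\mathbb{X}}^{SO}$. Subtracting this from $\bar Y$ leaves precisely the $|Q|=0$ and $|Q|=2$ contributions, with the prefactor $(iA)^{-w}$, which is (\ref{delta4}).

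\textbf{Main obstacle: the sign bookkeeping.} The substantive identifications are taken from earlier results, so the only delicate point is tracking the factors $(-1)^{c}$, $(-1)^w$, the $i^w$ in $\chi_Q$, and the choice of $(iA)^{-w}$ versus $(-iA)^{-w}$. These must combine into a single uniform prefactor $(iA)^{-w}$ for both the $|Q|\le2$ terms and the $|Q|=4$ terms. I would check this first on the torus knot $T(4,n)$ with $n$ odd. There, by (\ref{T(4,n)}), $\bar\delta=A^{-4n}$. This should come from $\chi_0$ and the $|Q|=2$ characters computed with the representations (\ref{4strandsigma}) and (\ref{r0rep}), using $\rho_0(\sigma_3)=\rho_0(\sigma_1)$. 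Running that example fixes any residual global sign, exactly as in the $\pm$ of (\ref{Thm5.1}).
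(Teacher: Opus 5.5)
Your proposal follows the paper's argument: evaluate (\ref{new4}) at $\alpha=-iA$, $\beta=iq$, use $\eta_Q(-iA,iq)=d_Q(A,q)/d_{[1]}(A,q)$, $Y=(-1)^{c+1}KF(-iA,iq)$ and the braid parity relation between $c$ and $w$, and then absorb the $|Q|=4$ terms via $\chi_Q(iq)=i^wh^Q(q)$. One correction: a 4-strand knot has odd writhe, not even as you wrote, and no residual sign needs fixing by the $T(4,n)$ example, because $(-1)^{c+1}\cdot(-1)=(-1)^c=(-1)^w$ for $m=4$ and $(-1)^w(-i)^{-w}=i^{-w}$, so the prefactor is exactly $(iA)^{-w}$ for every closed 4-strand braid.
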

  \begin{proof}
      The proof is the same as in the 3-strand case in Prop.~\ref{prop:delta-x0}, but now with the parities of $c$ and $w$ matching.
  \end{proof}

\vskip2mm
\noindent{$\circ$} Example  $T(4,n)$. For the braid $(\sigma_3\sigma_2\sigma_1)^n$ with $n$ odd, we compute
$\chi_{[11]}=\chi_{[2]}=0$ and $\chi_0=-\al^{-n}$ yielding $\bar\delta$ in (\ref{T(4,n)}). 
\vskip2mm
\noindent{$\circ$} Example  $10_{128}$.  With braid $\sigma_1^3\sigma_2\sigma_1^2\sigma_2^2\sigma_3\sigma_2^{-1}\sigma_3$
we find $\chi_0=-\al^{-3}$ and
\ba
\chi_{[11]}\hspace{-2mm}&=&\hspace{-2mm}\al (\beta^{-6}+\beta^{-4})-\al^{-1}(\beta^{-6}+3\beta^{-4}+3\beta^{-2}+1)\nonumber\\
 &&\hspace{-9mm}+\beta^{-7}+\beta^{-5}-\beta^{-3}-\beta^{-1}+\al^{-2}(-\beta^{-7}-2\beta^{-5}-\beta^{-3}+\beta^{-1}+2\beta+\beta^{3})\nonumber\\
 &&\hspace{-9mm}+\al^{-3}(\beta^{-4}+2\beta^{-2}+2+\beta^{2}),
    \ea
    while $\chi_{[2]}(\al,\beta)=\chi_{[11]}(\al,\beta^{-1})$.
By (\ref{delta4}), these give precisely $\bar \delta(10_{128})$ in (\ref{10_128}). 
\begin{prop}\label{prop:F4E4matrices}
Full twists $F_4^k:=(\sigma_3\sigma_2\sigma_1)^{4k}$ are, again, central in the BMW algebra. Their matrix representations for $|Q|=4$ read \cite{Petrou3}
\be
\rho_{[31]}(F_4^k)=\beta^{4k}I_3,\;\;\rho_{[22]}(F_4^k)=I_2,
\ee
while for $|Q|=2,0$ 
\be\label{r11F4}
\rho_{[11]}(F_4^k)=(\al\beta)^{-2k} I_6, \;\;\rho_{[2]}(F_4^k)=\al^{-2k}\beta^{2k} \mathbb{I}_6, \;\;\rho_0(F_4^k)=\al^{-4k}I_3,
\ee
 with $I_n$  the  $n\times n$ identity matrix.
For Jucys-Murphy twists $\tilde{E}_4:=\sigma_3\sigma_2\sigma_1^2\sigma_2\sigma_3$, we compute
\be\label{E4rep}
\rho_{[31]}(\tilde{E}_4)=\begin{pmatrix} -\beta ^{-2}& \\
&\beta^4 I_2\end{pmatrix},\;\;\rho_{[22]}(\tilde{E}_4^k)=I_2, \;\;\rho_0(\tilde{E}_4^k)=\al^{-2k}I_3,
\ee
while for  $Q=[31],[11]$ they  are non-diagonal and  lengthy and hence we omit them here for simplicity. For the partial full twist $F_3:=(\sigma_2\sigma_1)^3$, again omitting the non-diagonal $6\times 6$ matrices, we find
     \be
\rho_{[31]}(F_3)=\begin{pmatrix} -\beta ^{6}& \\
& I_2\end{pmatrix},\;\; \rho_{[22]}(F_3)=I_2, \rho_0(F_3)=\al^{-2}I_3.
    \ee
\end{prop}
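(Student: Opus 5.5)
Centrality of $F_4^k$ is settled abstractly first; the matrices are then obtained by direct computation, with representation theory used to predict and check the scalars. The element $\Delta^2=(\sigma_3\sigma_2\sigma_1)^4$ is central in $B_4$, so it commutes with every $\sigma_i$. It remains to show it commutes with every $e_i$. I would use the defining relation (\ref{BMWskein}) in the form $e_i=z^{-1}(\sigma_i+\sigma_i^{-1})-1$. This expresses each $e_i$ as a polynomial in $\sigma_i^{\pm1}$, so commutation with all $\sigma_i$ gives commutation with all $e_i$. Hence $F_4^k$ is central in $C_4(\alpha,\beta)$. By Schur's lemma it then acts as a scalar $c_Q^k$ in every irreducible $\rho_Q$, and only the scalars $c_Q$ at $k=1$ are needed.

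To find $c_Q$, I would evaluate a determinant rather than multiply $6\times 6$ matrices blindly. Since $F_4=(\sigma_3\sigma_2\sigma_1)^4$ contains twelve generators and $\rho_Q(F_4)=c_QI_{\dim Q}$, we get $c_Q^{\dim Q}=\prod\det\rho_Q(\sigma_i)$. For $Q=[11]$ the paper gives $\det\rho_{[11]}(\sigma_i)=\alpha^{-1}\beta^{-1}$, so $c^6=(\alpha\beta)^{-12}$ and $c=\zeta(\alpha\beta)^{-2}$ with $\zeta^6=1$. The root of unity is fixed by specialising, for instance at $\beta=1$, or by computing one diagonal entry of the product. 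The same reasoning handles $[2]$ via $\beta\mapsto\beta^{-1}$. For $\rho_0$, which uses $\det\rho_0(\sigma_i)=\alpha^{-1}$, it gives $c^3=\alpha^{-12}$, hence $\alpha^{-4}$. This is consistent with the 3-strand result $\rho_0(F_3)=\alpha^{-2}I_3$ from Prop.~\ref{prop:F3E3matrices}. For $|Q|=4$, $\chi_Q$ coincides with the $SU(N)$ Racah data, and the values $\beta^{4k}I_3$ and $I_2$ can be taken from \cite{Petrou3}, up to the substitution $\beta=iq$.

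For $\tilde E_4$ and $F_3$ I would restrict along the Bratelli diagram. Both elements lie in the image of a smaller algebra: $F_3$ lies in $C_3\subset C_4$, and $\tilde E_4$ commutes with $C_3$, since it is a Jucys--Murphy element. Under restriction, $\rho_Q$ decomposes into blocks labelled by the diagrams reachable one level down. The partial twist $F_3$ then acts on each block by the central scalar of $C_3$, which is known from Prop.~\ref{prop:F3E3matrices}. This explains the block forms $\mathrm{diag}(-\beta^6,I_2)$ and $I_2$, and also $\alpha^{-2}I_3$ for $\rho_0$. Similarly $\tilde E_4 = F_4F_3^{-1}$ (using $F_4=F_3\otimes \tilde E_4$), so on each block it equals $c_Q\,c'_{Q'}{}^{-1}$. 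I would verify each asserted block by multiplying out the explicit matrices of (\ref{4strandsigma}) together with the $|Q|=4$ $R$-matrices.

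The main obstacle is pinning down the overall scalars, meaning the roots of unity and sign conventions such as the $\pm i$ between $\chi_Q$ and $h^Q$. Equally delicate is the claim that the Bratelli blocks appear already diagonalised in the chosen bases. For $\rho_{[31]}(\tilde E_4)$, where the statement says the matrix is diagonal, I would check this by direct multiplication of the $3\times3$ matrices.
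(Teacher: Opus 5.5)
Your centrality argument ($\Delta^2$ central in $B_4$, together with $e_i=z^{-1}(\sigma_i+\sigma_i^{-1})-1$) is sound. So is the Schur-plus-determinant route to the $F_4$ scalars on $\rho_{[11]},\rho_{[2]},\rho_0$: the leftover factor is a constant root of unity, fixed by one evaluation. Both are a more conceptual substitute for the paper's proof, which just cites \cite{Petrou3} for $|Q|=4$ and multiplies out (\ref{r0rep}), (\ref{4strandsigma}) for the rest. Restriction along the Bratelli diagram also correctly gives $\rho_0(F_3)=\al^{-2}I_3$ and $\rho_0(\tilde E_4^k)=\al^{-2k}I_3$.

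The gap is in the $|Q|=4$ blocks of $F_3$ and $\tilde E_4$: the scalars you say you would feed in do not produce the asserted matrices. In the normalization of Prop.~\ref{prop:F3E3matrices}, $F_3$ acts by $\beta^6$ on $[3]$ (since $\rho_{[3]}(\sigma_i)=\beta$) and by $-1$ on $[21]$. Restriction therefore yields $\rho_{[31]}(F_3)=\mathrm{diag}(\beta^6,-I_2)$ and $\rho_{[22]}(F_3)=-I_2$. Then $\tilde E_4=F_3^{-1}F_4$ yields $\rho_{[31]}(\tilde E_4)=\mathrm{diag}(\beta^{-2},-\beta^4 I_2)$ and $\rho_{[22]}(\tilde E_4^k)=(-1)^kI_2$. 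These are the negatives (resp.\ $(-1)^k$ times) of the stated matrices. So your sentence ``this explains the block forms $\mathrm{diag}(-\beta^6,I_2)$ and $I_2$'' is false. It is exactly the sign issue you list as the main obstacle, assumed away rather than resolved.

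The stated $|Q|=4$ entries are the real $R$-matrix (Hecke) values of \cite{Petrou3}, with eigenvalues $q,-q^{-1}$, rewritten via $q=-i\beta$. In that normalization $F_3\mapsto q^6$ on $[3]$ and $1$ on $[21]$. Likewise $\tilde E_4\mapsto q^{2c}$ with $c$ the content of the box containing $4$, giving $\mathrm{diag}(q^{-2},q^4,q^4)$ on $[31]$ and $1$ on $[22]$. The BMW matrices with $|Q|=4$ are $i$ times these; this is the relation $\chi_Q=i^wh^Q$. Hence the six-letter words $F_3$ and $\tilde E_4$ acquire a factor $i^6=-1$, whereas the twelve-letter $F_4$ acquires $i^{12}=1$. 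That is why your $F_4$ values match the statement and your $F_3$, $\tilde E_4$ values do not.

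To obtain the stated blocks you must run the restriction argument with the Hecke scalars, i.e.\ in the normalization of \cite{Petrou3}; that is all the paper's citation does. You should also say explicitly that these entries refer to $R_Q$ rather than to $\rho_Q=iR_Q$. Likewise, your final ``multiply out and verify'' step will only agree if you multiply the real $R$-matrices of \cite{Petrou3}, not the BMW-normalized ones.
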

\begin{proof}
    For $Q$ s.t. $|Q|=4$ these are results from \cite{Petrou3}  (with $\beta=iq$) and for the remaining $Q$, by direct computation using (\ref{r0rep}) and (\ref{4strandsigma}).
\end{proof}

\section{The HOMFLY--PT/Kauffman relation and HZ factorisability 
}\label{sec:H/KF}
 The HOMFLY--PT  and Kauffman polynomials are in general not related, with the latter usually being much more lengthy and complicated. However, in \cite{Perez} it was found using gauge theoretic techniques, that for torus knots these two polynomials can be  related by\footnote{In \cite{Stevan} this HOMFLY--PT/Kauffman relation  for torus knots was investigated  for representations beyond 
the fundamental.}
 \ba\label{KFhat}
&&\hspace{-6mm}H(\mathcal{K};A,z)=\widetilde{KF}_{\rm even}(\mathcal{K};A,z)-\frac{z}{A-A^{-1}}\widetilde{KF}_{\rm odd}(\mathcal{K};A,z)=:\widehat{KF}(\mathcal{K};A,z)\nonumber\\
&&\hspace{11mm}\widetilde{KF}_{\rm even/odd}(A,z):= \frac{1}{2}\left(\widetilde{KF}(iA,iz)\pm\widetilde{KF}(iA,-iz)\right).
 \ea
The authors then naturally posed the question: Does this relation holds for other  knots or links beyond the torus family? An affirmative answer was given by Theorem 4.1 of \cite{Petrou2}, which proves that this relation holds for several hyperbolic families of knots, that were found through HZ-factorisability properties of the HOMFLY--PT polynomial. In the present paper, we extend this result by viewing this relation under the lens of the character expansions, outlined in the previous sections.
Peculiarly, in the context of topological string theory this relation is connected to the vanishing of  BPS degeneracy (i.e. the number of BPS states) with two crosscaps \cite{Petrou2}.  
 
The relation $H(A,z)=\widehat{KF}(\mathcal{K};A,z)$ can be  expressed in terms of the unnormalised Dubrovnik polynomial as follows. Multiplying by an overall factor $1-(\frac{A-A^{-1}}{z})^2$ on both sides of (\ref{KFhat}) and using $\widetilde{\overline{KF}}=(\frac{A-A^{-1}}{z}-1 )\widetilde{KF}$ and $\overline{H}=\frac{A-A^{-1}}{z}H$, yields
 \ba
&&\hspace{-7mm}\left(\frac{z}{A-A^{-1}}-\frac{A-A^{-1}}{z}\right)\overline{H}(A,z)\nonumber\\
&&\hspace{-5mm} =\frac{1}{2}\left(-\left(\frac{A-A^{-1}}{z}+1\right)\widetilde{\overline{KF}}(A,z)+\left(\frac{A-A^{-1}}{z}-1\right)\widetilde{\overline{KF}}(A,-z)\right)\nonumber\\
 &&\hspace{-5mm} -\frac{z}{2(A-A^{-1})}\left(-\left(\frac{A-A^{-1}}{z}+1\right)\widetilde{\overline{KF}}(A,z)-\left(\frac{A-A^{-1}}{z}-1\right)\widetilde{\overline{KF}}(A,-z)\right),\nonumber
 \ea
  which simplifies to
 \be\label{HKFbar}
 \bar{H}(A,z)=\frac{1}{2}(\widetilde{\overline{KF}}(A,z)-\widetilde{\overline{KF}}(A,-z)).
 \ee
 Using that $\widetilde{KF}(A,z)=KF(iA,iz)$ and  (\ref{KF-Y}), this can be written as
 \be\label{homfly-Dubrovnik}
\bar{H}(A,q)=\frac{1}{2}\left(\bar{Y}(-A,-z)-\bar{Y}(-A,z)\right).
\ee

\begin{prop}\label{HKF3x0}
    The HOMFLY--PT/Kauffman relation (\ref{KFhat}), or equivalently (\ref{homfly-Dubrovnik}), for a knot $\mathcal{K}$ with a 3-strand braid representative can be expressed in terms of $\chi_0$ as
\be\label{x0HKF}
\widetilde{\chi_0}_{\rm odd}(\mathcal{K};A,q)=\frac{A-A^{-1}}{q-q^{-1}}\widetilde{\chi_0}_{\rm even}(\mathcal{K};A,q),
\ee
where $\widetilde{\chi_0}_{\rm even/odd}(A,q):=\frac{1}{2}\left(\chi_0(iA,iq)\pm\chi_0(iA,-iq)\right)$. 
\end{prop}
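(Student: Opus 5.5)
The plan is to reduce the HOMFLY--PT/Kauffman relation to a condition on the correction term $\delta$ alone, and then translate that condition into one on $\chi_0$ using Prop.~\ref{prop:delta-x0}. For a knot with a 3-strand braid representative the closure is connected. The permutation of the braid is then a 3-cycle, which is even, so the writhe $w$ is even. I will use this parity fact repeatedly.

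\textbf{Step 1: work in normalised form.} Dividing (\ref{homfly-Dubrovnik}) by the unknot factors, the relation $H=\widehat{KF}$ becomes
$$H = Y_{\rm even}+\frac{q-q^{-1}}{A-A^{-1}}\,Y_{\rm odd},$$
where even/odd now refers to $q\mapsto -q$ (equivalently $z\mapsto -z$). I will fix the signs by checking this identity on the torus knots $T(3,n)$, for which the relation is known to hold and $\delta=0$.

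\textbf{Step 2: peel off the $SU$ part.} Insert $Y=\mathbb{X}^{SO}+\delta$ from (\ref{Delta}). By (\ref{SU(N)}) we have $H=\mathbb{X}^{SU}$, and by the remark (\ref{Kchar}) the $\mathbb{X}^{SO}$ contribution reproduces exactly $\mathbb{X}^{SU}$. This uses the fact that the coefficients $h^Q$ carry a definite parity under $q\mapsto -q$, namely $(-1)^w=+1$. What remains is the statement that the relation holds if and only if
$$\delta_{\rm even}+\frac{q-q^{-1}}{A-A^{-1}}\,\delta_{\rm odd}=0.$$
Here I must verify that the parity split used in (\ref{Kchar}), stated there in terms of powers of $A$, agrees with the split in $q$. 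For knots this holds because $Y$, $d_Q$ and $h^Q$ all satisfy a joint parity constraint in $(A,q)$, namely total degree $\equiv w \pmod 2$.

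\textbf{Step 3: bring in $\chi_0$.} By Prop.~\ref{prop:delta-x0}, $\delta(A,q)=(iA)^{-w}\chi_0(-iA,iq)$. Since $w$ is even, the prefactor is $(-1)^{w/2}A^{-w}$. This is invariant under $q\mapsto -q$, so it factors out of the condition in Step 2.

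Next I rewrite $\chi_0(-iA,iq)$ in terms of $\chi_0(iA,\pm iq)$. For a knot, the Kauffman polynomial, and hence $\alpha^{-w}\chi_0$ (because the other terms of (\ref{new}) share the parity), is invariant under $(\alpha,\beta)\mapsto(-\alpha,-\beta)$ up to the sign $(-1)^{w}=1$. It follows that
$$\chi_0(-iA,iq)=\pm\,\chi_0(iA,-iq).$$
Under this identification the roles of the even and odd parts are exchanged or sign-flipped. The condition from Step 2 then becomes $\widetilde{\chi_0}_{\rm odd}=\frac{A-A^{-1}}{q-q^{-1}}\widetilde{\chi_0}_{\rm even}$, which is (\ref{x0HKF}). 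This last rearrangement amounts to multiplying through by $\frac{A-A^{-1}}{q-q^{-1}}$ and renaming the parts.

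\textbf{Main obstacle.} The difficulty is entirely in the bookkeeping of factors of $i$ and signs. These come from several sources: the conversions $Y\leftrightarrow KF\leftrightarrow\widetilde{KF}$ (including the $(-1)^{c+1}$ in the footnote to (\ref{KF-Y})), the prefactor $(iA)^{-w}$, and the identification $\chi_Q=i^w h^Q$. Any single slip would invert the factor $\frac{A-A^{-1}}{q-q^{-1}}$ or swap even and odd.

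To control this, I will fix every convention by testing each intermediate identity on two cases:
\begin{itemize}
\item $T(3,n)$, where $\chi_0=0$ and both sides of (\ref{x0HKF}) vanish;
\item $5_2$, where $\chi_0$ is given explicitly above and the relation is known to hold, so (\ref{x0HKF}) can be checked directly.
\end{itemize}
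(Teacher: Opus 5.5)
Your proposal is correct and follows essentially the paper's route. You substitute the character expansion with the $\chi_0$ correction (Prop.~\ref{prop:delta-x0}) into (\ref{homfly-Dubrovnik}), cancel the $h^Q$ part against $H=\mathbb{X}^{SU}$ using that the $A$-odd part of each $d_Q$ is $S_Q$ (which is what (\ref{Kchar}) encodes), and read off (\ref{x0HKF}).

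The only difference is bookkeeping. You work in normalised form, which forces you to add the parity identity $\chi_0(-\alpha,-\beta)=\chi_0(\alpha,\beta)$ for even $w$. This identity is true, for example by conjugating each $\rho_0(\sigma_i)$ with $\mathrm{diag}(1,-1,1)$, and your $\pm$ hedges are harmless because the final condition is homogeneous. The paper instead keeps the unnormalised $\bar Y$ from (\ref{Thm5.1}) evaluated at $(-A,\pm q)$. There $\chi_0(iA,\mp iq)$ appear directly, and the leftover condition $(1+\frac{A-A^{-1}}{q-q^{-1}})\chi_0(iA,-iq)=(1-\frac{A-A^{-1}}{q-q^{-1}})\chi_0(iA,iq)$ rearranges to (\ref{x0HKF}).
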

\begin{proof}
    Using  (\ref{Thm5.1}) together with the fact that $h^Q(-q)=h^Q(q)$, since it only depends on even powers of $q$, and that $d_Q(-A,-q)=d_Q(A,q)$, the RHS   of (\ref{homfly-Dubrovnik}) becomes
    \ba
   &&\hspace{-7mm} \frac{1}{2}(-1)^wA^{-w}\sum_Qh^Q\left(d_Q(A,q)-d_Q(-A,q)\right)\\\nonumber
   &&\hspace{-2mm}\pm\frac{1}{2}(-1)^wA^{-w}\left(\left(1+\frac{A-A^{-1}}{q-q^{-1}}\right)\chi_0(iA,-iq)-\left(1-\frac{A-A^{-1}}{q-q^{-1}}\right)\chi_0(iA,iq)\right)
    \ea
    Using (\ref{3strand1}) it is easy to show $d_Q(A,q)-d_Q(-A,q)=2S_Q(A,q)$ and since $w$ is always even for 3-strand \emph{knots}, the term in the first line exactly cancels with the l.h.s of (\ref{homfly-Dubrovnik}), which is expressed by the character expansion as $\bar H=A^{-w}\sum_Q h^Q S_Q$ \cite{Petrou3}. The remaining terms are vanishing when (\ref{x0HKF}) holds.
\end{proof}

\begin{cor}
    \label{thm:FTeffect} Concatenation by full twists $F_3^k$ does not affect the HOMFLY--PT/Kauffman relation.
\end{cor}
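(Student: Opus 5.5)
The plan is to work entirely through Proposition~\ref{HKF3x0}. For a knot with a 3-strand braid representative $x$, the HOMFLY--PT/Kauffman relation is equivalent to condition (\ref{x0HKF}) on $\chi_0(x)$. It therefore suffices to show that replacing $x$ by $x\otimes F_3^k$ changes $\chi_0$ in a way that preserves (\ref{x0HKF}). I will also check that $x\otimes F_3^k$ is still a knot with a 3-strand braid, so that Proposition~\ref{HKF3x0} applies to both. Since $F_3$ is a pure braid, closing $x\otimes F_3^k$ gives the same number of components as closing $x$. It is also still a 3-strand braid, so the proposition applies to both knots.

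The key input is Proposition~\ref{prop:F3E3matrices}: $F_3^k$ is central and $\rho_0(F_3^k)=\al^{-2k}I_3$. Hence
\be
\chi_0(x\otimes F_3^k;\al,\beta)=\text{tr}\big(\rho_0(x)\,\al^{-2k}I_3\big)=\al^{-2k}\chi_0(x;\al,\beta).
\ee
Now substitute $\al=iA$ and $\beta=\pm iq$ into the definitions of $\widetilde{\chi_0}_{\rm even/odd}$. The prefactor becomes $(iA)^{-2k}=(-1)^kA^{-2k}$, which is independent of the sign of $\beta$. It therefore factors out of both the even and the odd parts:
\be
\widetilde{\chi_0}_{\rm even/odd}(x\otimes F_3^k;A,q)=(-1)^kA^{-2k}\,\widetilde{\chi_0}_{\rm even/odd}(x;A,q).
\ee
Since (\ref{x0HKF}) is linear and homogeneous in $\chi_0$, multiplying both sides by the nonzero factor $(-1)^kA^{-2k}$ shows it holds for $x\otimes F_3^k$ if and only if it holds for $x$. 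The argument works for every $k\in\mathbb{Z}$, so it covers negative twists as well.

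The step I expect to need the most care is checking that nothing else in the proof of Proposition~\ref{HKF3x0} changes under the twist. In particular, the writhe changes by $6k$, which keeps $w$ even. The sign choice $\pm$ in (\ref{Thm5.1}) depends only on the parity relation between $c$ and $w$, so it is also unchanged. Finally, the $|Q|=3$ terms cancel identically for any $h^Q$, because the cancellation in that proof used only $d_Q(A,q)-d_Q(-A,q)=2S_Q$ and the $SU(N)$ expansion. So the whole equivalence reduces to the scaling of $\chi_0$. For completeness, the $|Q|=3$ coefficients also scale by scalars under the twist, namely $\beta^{\pm 6k}$ and $(-1)^k$, though this is not needed.
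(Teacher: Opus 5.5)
Your proposal is correct and is essentially the paper's proof. Because $\rho_0(F_3^k)=\alpha^{-2k}I_3$ is scalar, $\chi_0$ is simply rescaled, and that rescaling is compatible with condition (\ref{x0HKF}), so Prop.~\ref{HKF3x0} gives the result. Your additional checks are harmless elaborations the paper leaves implicit: the closure is still a knot because $F_3$ is pure, the writhe stays even, the factor $(iA)^{-2k}$ does not depend on the sign of $\beta$, and its invertibility gives the converse.
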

\begin{proof}
 Using (\ref{r0F3}) and the trace identity $\text{tr}(cM)=c\text{tr}(M)$ for  a scalar $c$ and  a matrix $M$, we compute $\chi_0(\mathcal{K}\otimes F_3^k)=\text{tr}(\rho_0(\mathcal{K})\al^{-2k}I_3)=\al^{-2k}\chi_0(\mathcal{K})$. Hence, if  $\chi_0(\mathcal{K})$ satisfies (\ref{x0HKF}), so does $\chi_0(\mathcal{K}\otimes F_3^k)$,  and by  Prop.~\ref{HKF3x0} the HOMFLY--PT/Kauffman relation is satisfied.
\end{proof} 
It is noteworthy that since full twists $F_3$ correspond to Dehn twists, their concatenation does not depend on the specific braid word chosen for $\mathcal{K}$. This is also reflected by the fact that $F_3$ are central elements in the BMW algebra (c.f. Prop.~\ref{prop:F3E3matrices}). 
An analogue of Cor.~\ref{thm:FTeffect} for Jucys-Murphy twists $E_3$, however, is not so straight forward, because of the braid word-dependence of the resulting knot, obtained after concatenation with $E_3$. In  particular, for Prop.~\ref{HKF3x0} to hold after Jucys-Murphy twists are attached, the chosen 3-strand braid representative of a knot $\mathcal{K}$ needs  to satisfy not only (\ref{x0HKF}), but also some additional conditions on the off-diagonal elements of the first and second rows   of  $\rho_0(\mathcal{K})$, which  in general will affect  $\chi_0(\mathcal{K}\otimes E_3)$.

It was conjectured in \cite{Petrou2} that
the above mentioned relation between the HOMFLY--PT and Kauffman polynomials of a knot, occurs
 if and only if the  HZ transform of the HOMFLY--PT polynomial is factorisable, that is
  \be\label{conj}
 { Z(\mathcal{K}) = \frac{\lambda \prod_{i=0}^{m-2}(1-\lambda q^{\alpha_i})}{\prod_{i=0}^m (1-\lambda q^{\beta_i})}\iff H (\mathcal{K}) = \widehat {KF}(\mathcal{K})  }.
 \ee
\begin{conj}\label{conj:3strand}
    The conjectured relation (\ref{conj}) between HZ factorisability and the HOMFLY--PT/Kauffman relation,   for a knot that is a closed 3-strand  braid $\mathbf{b}$, can be stated in terms of  characters of the BMW algebra representations as 
 \be\label{3conjecture}
\hspace{-1mm}
\boxed{\chi_{[21]}(iq)
=\pm \frac{q^{\gamma}+q^{-\gamma}}{q+q^{-1}}\iff{\chi_0}_{\rm odd}(iA,iq)=\frac{A-A^{-1}}{q-q^{-1}}{\chi_0}_{\rm even}(iA,iq),}
\ee
where $\widetilde{\chi_0}_{\rm even/odd}(A,q):=\frac{1}{2}\left(\chi_0(iA,iq)\pm\chi_0(iA,-iq)\right)$   
and $\gamma(\mathbf{b})\in \mathbb{Z}_{\rm even}$. 
\end{conj}
The condition on the left hand side 
was shown to imply HZ-factorisability in \cite{Petrou3}. It was also shown that $\chi_{[21]}$ is invariant under full twists, partial full twist 
and Jucys-Murphy twists $\tilde{E}_3$. 
However, using this complex BMW representation, the effect of Jucys-Murphy twist seems to be more subtle. In particular, if for a  knot $\mathcal{K}$ with $  \rho_{[21]}(\mathcal{K})= \begin{pmatrix} 
  x_{11}&x_{12}\\
  x_{21}&x_{22}
  \end{pmatrix}$, s.t. $\chi_{[21]}$ 
  satisfies the condition on the l.h.s,
   $\chi_{[21]}(\mathcal{K}\otimes E_3^k)=(-1)^k(x_{11}\beta^{2k}-ix_{12}\sum_{j=0}^{2k-1}\beta^{2(k-j)-1}+x_{22}\beta^{-2k})$, may still satisfy or not satisfy it based on the off diagonal term $x_{12}$. 
  In fact, given a specific initial braid word, HZ factorisability will depend on which version of Jucys-Muprhy twist is attached to it. This explains the braid-word-dependence under Jucys-Murphy concatenation, that was observed  in \cite{Petrou2}.  

  As mentioned in \cite{Petrou3}, the most general 3-strand HZ factorisable family can be expressed as the closed braid $\sigma_1\sigma_2^{\pm(1+2j)}\otimes F_3^l\otimes E_3^k$, 
  with $j,k,l\in\mathbb{Z}$, or  equivalently  as $\sigma_2\sigma_1\otimes F_2^j\otimes F_3^l\otimes \tilde{E}_3^k=:\mathcal{K}^{(3)}_{j,k,l}$, where $F_2=\sigma_1^2$ is the partial full twist. Here shall here denote this family by $\mathcal{K}^{\pm}_{j,k,l}$, explicitly indicating positive or negative partial full twists. 
\begin{thm}\label{thm:Kjkl,H-KF} The 3-strand family of knots $\mathcal{K}^{\pm}_{j,k,l}:=\sigma_1\sigma_2^{\pm(1+2j)}\otimes F_3^l\otimes E_3^k$ is HZ-factorisable and satisfies the HOMFLY--PT/Kauffman relation $H=\widehat{KF}$  (\ref{KFhat}), or equivalently Prop.~\ref{HKF3x0}. 
\end{thm}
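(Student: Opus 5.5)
The plan is to split the statement into its two claims and treat them separately, reducing each to a computation with the explicit BMW matrices of Prop.~\ref{prop:F3E3matrices}.

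\textbf{HZ factorisability.} By \cite{Petrou3}, the left-hand side of Conjecture~\ref{conj:3strand} implies HZ factorisability. So it suffices to compute $\chi_{[21]}(\mathcal{K}^{\pm}_{j,k,l})$ and show it has the form $\pm(q^\gamma+q^{-\gamma})/(q+q^{-1})$.

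First compute $\rho_{[21]}(\sigma_1\sigma_2^{\pm(1+2j)})$ from (\ref{matrixrepresentation}) and (\ref{r2F2}). The powers $\sigma_2^{2j}$ are lower triangular, so the product has an explicit form with entries given by geometric sums in $\beta$. Then multiply by $(-1)^l I_2$, which accounts for $F_3^l$. Finally use the formula $\chi_{[21]}(\mathcal{K}\otimes E_3^k)=(-1)^k(x_{11}\beta^{2k}-ix_{12}\sum\beta^{2(k-j)-1}+x_{22}\beta^{-2k})$. After setting $\beta=iq$, the sums should telescope to $(q^{\gamma}+q^{-\gamma})/(q+q^{-1})$ with $\gamma$ depending linearly on $j$ and $k$. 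Alternatively, invariance of $\chi_{[21]}$ under $F_2$, $F_3$ and $\tilde E_3$ in the equivalent form $\sigma_2\sigma_1\otimes F_2^j\otimes F_3^l\otimes\tilde E_3^k$ reduces this to the trefoil-type base case, and I would cite that route.

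\textbf{HOMFLY--PT/Kauffman relation.} By Cor.~\ref{thm:FTeffect}, full twists rescale $\chi_0$ by $\alpha^{-2l}$, so we may take $l=0$. By Prop.~\ref{HKF3x0}, it then suffices to verify (\ref{x0HKF}) for $\chi_0(x)=\mathrm{tr}\big(\rho_0(\sigma_1)\rho_0(\sigma_2)^{\pm(1+2j)}\rho_0(E_3^k)\big)$.

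Both $\rho_0(\sigma_2^{2j})$ and $\rho_0(E_3^k)$ are block lower triangular: an upper-left $2\times2$ block with entries that are pure geometric sums in $\beta$, and a third row $(x,y,\alpha^{\pm\cdot})$. Since only the diagonal of the product enters the trace, I will write the trace explicitly as
\begin{itemize}
\item a contribution from the upper $2\times2$ blocks, which is $\alpha$-independent apart from an overall power of $\alpha$;
\item plus terms in which the third column of $\rho_0(\sigma_1)$ (the entry $z$ and the diagonal $z$) meets the third rows $x,y,x',y'$.
\end{itemize}

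Next substitute $\alpha=iA$, $\beta=\pm iq$. Under $\beta\to-\beta$ (i.e. $z\to-z$) the parity splitting separates terms with odd versus even total degree in $\beta$. I then need the identity
$$\widetilde{\chi_0}_{\rm odd}\,(q-q^{-1})=(A-A^{-1})\,\widetilde{\chi_0}_{\rm even}.$$
The natural strategy is to recognise each of $x,y,x',y'$ as a difference quotient of the form $\sum_i\alpha^i(\ldots)$. This should come from a closed form such as $x=\dfrac{\alpha^{2k}P(\beta)-P'(\beta)}{\alpha-\ldots}$, and the telescoping should produce precisely the factor $(A-A^{-1})/(q-q^{-1})$.

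The main obstacle is this last identity for general $(j,k)$ simultaneously. The mixed sums in $x,y,x',y'$ make direct verification messy. I would first prove closed forms for $\rho_0(E_3^k)$ and $\rho_0(\sigma_2^{2j})$ via their eigen-decomposition: the eigenvalues are $\alpha^{-2k}\beta^{\pm2k}$ and $1$ for $\rho_0(E_3^k)$, and $-\beta^{\pm 2}$-type values and $\alpha^{-2}$ for $\rho_0(\sigma_2^{2})$. In the closed form, $\chi_0$ becomes a short linear combination of monomials $\alpha^a\beta^b$ divided by $(\alpha\beta-1)(\alpha-\beta)$-type factors, for which the parity relation can be checked symbolically. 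The base cases $j,k\in\{0,1\}$ (e.g. $5_2$) would serve as sanity checks, and the $\pm$ cases would be handled by the symmetry $\beta\to\beta^{-1}$, $\sigma_2\to\sigma_2^{-1}$.
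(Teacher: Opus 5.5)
Your HZ-factorisability argument is the paper's. You compute $\chi_{[21]}$ from $\rho_{[21]}(\sigma_2^{2j})$, $\rho_{[21]}(F_3^l)=(-1)^lI_2$ and the $E_3^k$ trace formula, then invoke \cite{Petrou3}. Drop the ``invariance'' alternative: $\chi_{[21]}$ does change under partial full twists and $\tilde E_3$, so it does not reduce the problem to a base case.

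For the HOMFLY--PT/Kauffman part you also reduce to (\ref{x0HKF}) via Prop.~\ref{HKF3x0}, but by a different route. The paper multiplies out the triangular matrices of Prop.~\ref{prop:F3E3matrices} into the explicit sum (\ref{chi0kjl}), splits it by parity in $\beta$, and verifies (\ref{x0HKF}) by reindexing. You instead diagonalise. Your route works, and it closes more simply than you expect:
\begin{itemize}
\item Since $F_3=\sigma_2^2E_3$ is central, $\rho_0(E_3^k)=\alpha^{-2k}\rho_0(\sigma_2)^{-2k}$. Hence $\chi_0(\mathcal{K}^{\pm}_{j,k,l})=\alpha^{-2(k+l)}\,\mathrm{tr}(\rho_0(\sigma_1)\rho_0(\sigma_2)^{m})$ with $m=\pm(1+2j)-2k$ odd.
\item The spectral decomposition of $\rho_0(\sigma_2)$ alone is then enough. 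Its eigenvalues are $\beta,\beta^{-1},\alpha^{-1}$, so $\mathrm{tr}(\rho_0(\sigma_1)\rho_0(\sigma_2)^{m})=c_+\beta^{m}+c_-\beta^{-m}+c_0\alpha^{-m}$ with explicit rational coefficients $c_\pm,c_0$.
\item With $\alpha=iA$, $\beta=iq$, condition (\ref{x0HKF}) is equivalent to $(\alpha-\beta)(\alpha\beta-1)\chi_0$ being even in $\beta$.
\item One finds $(\alpha-\beta)(\alpha\beta-1)c_0=\alpha^2(1+\beta^2)$, which is even, while $(\alpha-\beta)(\alpha\beta-1)c_\pm$ are odd in $\beta$. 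So each term is even exactly because $m$ is odd.
\end{itemize}
The same collapse gives $\chi_{[21]}(\mathcal{K}^{\pm}_{j,k,l})=(-1)^{k+l}\chi_{[21]}(\sigma_1\sigma_2^{m})$, which shortens your HZ computation too.

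As for what each route buys: the paper's computation gives an explicit Laurent polynomial for $\chi_0$, and hence for $\delta$ via Prop.~\ref{prop:delta-x0}. Yours is uniform in $j$, $k$ and the sign. It also avoids the paper's separate treatment of $\mathcal{K}^{+}$, which is not accurate as stated. For example, $\mathcal{K}^{+}_{2,0,0}=\sigma_1\sigma_2^{5}$ has $k\le j$ but $\chi_0\neq 0$. In fact $\mathrm{tr}(\rho_0(\sigma_1)\rho_0(\sigma_2)^{m})$ vanishes only for $m=1,3$.

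Minor slips:
\begin{itemize}
\item $\rho_0(\sigma_2^2)$ has eigenvalues $\beta^{\pm2}$ and $\alpha^{-2}$, not $-\beta^{\pm2}$.
\item The third column of $\rho_0(\sigma_1)$ is $(z,-1,z)^T$, so its $-1$ entry also meets the third row.
\item Handling the sign by $\beta\to\beta^{-1}$ does nothing, since $\rho_0$ depends on $\beta$ only through $z$. You do not need it, because the spectral formula covers every odd exponent at once.
\end{itemize}
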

\begin{proof}
Using the matrix representations $\rho_{[21]}(F_3^k)$ and $\rho_{[21]}(E_3^k)$ in Prop.~\ref{prop:F3E3matrices}, we compute  
  \ba
&&\hspace{-10mm}\chi_{[21]}(\mathcal{K}^{-}_{j,k,l})=\\\nonumber
&&\hspace{-2mm}(-1)^{k+l}\left(\beta^{2(j+k+1)}+\sum_{i=0}^{2j}\beta^{2(j+k-i)}+\sum_{i=0}^{2k-1}\beta^{2(k-j-i-1)}+\beta^{-2(j+k+1)}\right)
  \ea
  and
  \be
  \chi_{[21]}(\mathcal{K}^{+}_{j,k,l})=(-1)^{k+l}\left(\beta^{2(k-j)}-\sum_{i=0}^{2j}\beta^{2(j+k-i)}+\sum_{i=0}^{2k-1}\beta^{2(k+j-i)}+\beta^{2(j-k)}\right).
  \ee
  Both of the above, which  agree\footnote{Up to an overall factor $(-1)^l$.} 
  with the expression computed in \cite{Petrou3},  satisfy the HZ-factorisability condition on the left hand side of (\ref{3conjecture}). 
    Moreover, again by direct computation using Prop.~\ref{prop:F3E3matrices}, we find
\ba\label{chi0kjl}
&&\hspace{-7mm}\chi_0(\mathcal{K}^{-}_{j,k,l})=\al^{-2(k+l)-1}\sum_{i=0}^{2k+2j+1}\beta^{2(k+j-i)+1}\\\nonumber
&&\hspace{-5mm}+\sum_{i=0}^{2(k+j)+1}\al^{-2(k+l)+i}\left(\beta^{2(k+j+1)-i}+\beta^{-2(k+j+1)+i}+2\sum_{r=0}^{2(k+j)-i}\beta^{2(k+j-r)-i}\right),
\ea
while $\chi_0(\mathcal{K}^{+}_{j,k,l})$ is given by the same formula with $j\mapsto -j$ when $k> j$ or by $\chi_0(\mathcal{K}^{+}_{j,k,l})=0$ when 
$k\leq j$, which correspond to 3-strand torus knots (e.g. $\mathcal{K}^{+}_{2,1,0}=T(3,5)$).
The expression in (\ref{chi0kjl}) can be readily split in odd and even parts $\widetilde{\chi_0}_{\rm even/odd}(A,q)=\frac{1}{2}(\chi_0(iA,iq)\pm\chi_0(iA,-iq))$, as 
\ba\label{chi0odd}
\widetilde{\chi_0}_{\rm odd}(\mathcal{K}^-_{j,k,l})&=&(iA)^{-2(k+l)-1}\sum_{i'=0}^{2k+2j+1}(iq)^{2(k+j-i')+1}\nonumber\\
&&+\sum_{i=1',{\rm odd}}^{2(k+j)+1}(iA)^{-2(k+l)+i'}\bigg((iq)^{2(k+j+1)-i'}\nonumber\\
&&+2\sum_{r=0}^{2(k+j)-i'}(iq)^{2(k+j-r)-i'}+(iq)^{-2(k+j+1)+i'}\bigg),
\ea
\ba\label{chi0even}
\widetilde{\chi_0}_{\rm even}(\mathcal{K}^-_{j,k,l})\hspace{-2mm}&=&\hspace{-4mm}\sum_{i'=0,{\rm even}}^{2(k+j)}(iA)^{-2(k+l)+i'}\bigg((iq)^{2(k+j+1)-i'}\nonumber\\
&&+2\sum_{r=0}^{2(k+j)-i'}(iq)^{2(k+j-r)-i'}+(iq)^{-2(k+j+1)+i'}\bigg).
\ea
From (\ref{chi0even}), noticing that the term in the bracket is divisible by $q-q^{-1}$, we compute
\ba
&&\hspace{-8mm}\frac{A-A^{-1}}{q-q^{-1}}\widetilde{\chi_0}_{\rm even}(\mathcal{K}^-_{j,k,l})\nonumber\\\nonumber
&&\hspace{-3mm}=\sum_{i'=0,{\rm even}}^{2(k+j)}\left((iA)^{-2(k+l)+1+i'}+(iA)^{-2(k+l)-1+i'}\right)\sum_{r=0}^{2(k+j)+1-i'}(iq)^{2(k+j)+1-i'-2r}\nonumber\\
&&\hspace{-3mm}=(iA)^{-2(k+l)-1}\sum_{i'=0}^{2(k+j)+1}(iq)^{2(k+j-i')+1}\nonumber\\
&&\hspace{20mm}+\sum_{i'=2,{\rm even}}^{2(k+j)}(iA)^{-2(k+l)-1+i'}\sum_{r=0}^{2(k+j)+1-i'}(iq)^{2(k+j-r)+1-i'}\nonumber\\
&&\hspace{20mm}+\sum_{i'=0,{\rm even}}^{2(k+j)}(iA)^{-2(k+l)+1+i'}\sum_{r=0}^{2(k+j)+1-i'}(iq)^{2(k+j-r)+1-i'}.
\ea
After relabeling $i''=-1+i'$ and $i'''=1+i'$ in the first sums of the last two lines, respectively, while giving them the same upper bound $2(k+j)+1$  (this is allowed since  $\sum_{r=0}^{-1}$ adds a vanishing contribution), it can be  shown with some simple algebraic manipulations that this expression becomes exactly equal to the odd part in (\ref{chi0odd}) and hence $\chi_0(\mathcal{K}^-_{j,k,l})$ satisfies (\ref{x0HKF}). The proof is the same for $\mathcal{K}^+_{j,k,l}$ when $k>j$, and trivial otherwise.
\end{proof}

 Hence, Conjecture~\ref{conj:3strand} is valid for all 3-strand HZ-factorisable knots known to us, which belong to the family $\mathcal{K}^{(3)}_{j,k,l}$. Finding a complete proof of the conjecture in the 3-strand case, would amount to showing that a 3-strand knot satisfies either of the conditions in the conjecture if and only if it belongs to this family. 

In the case of 4 strands a sufficient\footnote{\label{foot:L10n42}This condition is not necessary, as the link counter example $L10n_{42}\{1\}$ shows \cite{Petrou3}.} condition for HZ factorisability was found in \cite{Petrou3} to be 
\ba\label{4fact}
&&{\rm (i)} \;h^{[22]}=0,\nonumber\\
&&{\rm (ii)}\;h^{[31]}=-\frac{q^{\gamma_1}+q^{\gamma_2}+q^{\gamma_3}}{q^2+1+q^{-2}},
\ea
where $\gamma_{i}$ are odd integers  satisfying $\gamma_1+\gamma_2+\gamma_3=w$. Moreover, the condition for the HOMFLY--PT/Kauffman relation can be stated as follows. 
\begin{prop}
    For knots with 4-strand braid representatives, the condition for the  HOMFLY--PT/Kauffman  relation is expressed in terms of characters of the BMW algebra as 
\be\label{HKF4strands}
\sum_{Q,|Q|=2}\left(\chi_Q(iA,-iq)d_Q(A,q)-\chi_Q(iA,iq)d_Q(-A,q)\right)=-{\chi_{0}}_{\rm odd}(iA,iq),
\ee
where ${\chi_{0}}_{\rm odd}(iA,iq):=\chi_0(iA,iq)-\chi_0(iA,-iq).$ 
\end{prop}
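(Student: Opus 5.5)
We mirror the proof of Prop.~\ref{HKF3x0}, now using the 4-strand decomposition. By the preceding proposition (eq.~(\ref{delta4})), the unnormalised Dubrovnik polynomial of a 4-strand knit is
\be
\bar Y(A,q)=A^{-w}\sum_{|Q|=4}h^Q(q)d_Q(A,q)+(iA)^{-w}\Big(\chi_0(-iA,iq)+\sum_{|Q|=2}\chi_Q(-iA,iq)d_Q(A,q)\Big).
\ee
We substitute this into the right-hand side of (\ref{homfly-Dubrovnik}), $\frac12(\bar Y(-A,-q)-\bar Y(-A,q))$, where $z=q-q^{-1}$, so $z\mapsto -z$ corresponds to $q\mapsto -q$ up to $q\mapsto q^{-1}$. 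We then compare with $\bar H=A^{-w}\sum_{|Q|=4}h^QS_Q$ from \cite{Petrou3}.

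\textbf{Step 1: the four-box terms.} These cancel against $\bar H$ exactly as in the 3-strand case. For a 4-strand knot $w$ is odd, so $(-A)^{-w}=-A^{-w}$. The Racah coefficients satisfy $h^Q(-q)=(-1)^wh^Q(q)$, since they are built from $q^{\pm w}$-type monomials, and we have $d_Q(-A,-q)=d_Q(A,q)$. The four-box part therefore reduces to a multiple of $\frac12\sum h^Q(d_Q(A,q)-d_Q(-A,q))$. We then need the identity $d_Q(A,q)-d_Q(-A,q)=2S_Q(A,q)$ for $|Q|=4$. This can be read off from Table~\ref{tab:d_Q} of Appendix A, where $d_Q=S_Q\,(1+\text{even-in-}A\text{ correction})$ as in (\ref{3strand1}). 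With it, the four-box part reproduces $\bar H$. We will track the signs $(-1)^w$ and the factors $i^w$ carefully, using $\chi_Q(iq)=i^wh^Q(q)$.

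\textbf{Step 2: the remaining terms.} What survives is $\frac12(i(-A))^{-w}$ times
\be
\Big[\chi_0(iA,-iq)-\chi_0(iA,iq)+\sum_{|Q|=2}\big(\chi_Q(iA,-iq)d_Q(-A,-q)-\chi_Q(iA,iq)d_Q(-A,q)\big)\Big].
\ee
Using $d_Q(-A,-q)=d_Q(A,q)$ for $|Q|=2$, which follows from (\ref{d[2]}), the relation $H=\widehat{KF}$ becomes equivalent to the vanishing of this bracket. Moving the $\chi_0$ terms to the other side gives precisely (\ref{HKF4strands}), with ${\chi_0}_{\rm odd}(iA,iq)=\chi_0(iA,iq)-\chi_0(iA,-iq)$.

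\textbf{Main obstacle.} The delicate part is the bookkeeping of the overall prefactors $(iA)^{-w}$ versus $A^{-w}$ at $A\mapsto -A$ with $w$ odd. This must be done so that the four-box terms cancel exactly against $\bar H$ rather than doubling, and so that the zero- and two-box terms all acquire the same common prefactor, which can then be dropped. I would first check this on $T(4,n)$, where $\chi_{[2]}=\chi_{[11]}=0$ and $\chi_0=-\alpha^{-n}$ is even in $q$, so both sides of (\ref{HKF4strands}) vanish. That check fixes the sign conventions before the general argument is written out.
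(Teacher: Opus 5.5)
Your route is the paper's: insert (\ref{delta4}) into (\ref{homfly-Dubrovnik}), cancel the four-box terms against $\bar H=A^{-w}\sum_{|Q|=4}h^QS_Q$, and read the condition off from what remains. However, Step~1 fails as written.

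\textbf{Step 1.} Use the parities you list yourself: $w$ odd, $(-A)^{-w}=-A^{-w}$, $h^Q(-q)=-h^Q(q)$ and $d_Q(-A,-q)=d_Q(A,q)$. With these, the four-box part of $\frac12(\bar Y(-A,-q)-\bar Y(-A,q))$ is
\[
\tfrac12 A^{-w}\sum_{|Q|=4}h^Q(q)\big(d_Q(A,q)+d_Q(-A,q)\big),
\]
which is a sum, not a difference. The identity you then invoke, $d_Q(A,q)-d_Q(-A,q)=2S_Q$, is the 3-strand identity, and it is false for $|Q|=4$. There $S_Q$ has four factors $\{Aq^k\}$ and is even in $A$. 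The correction factor $\{q^4\}/\{Aq^k\}$ (with $k=3,1,-1,-3$ for $[4],[31],[211],[1111]$) is odd in $A$. It is odd in the 3-strand case too, so your ``even-in-$A$'' remark is backwards. Consequently $d_Q(A,q)-d_Q(-A,q)=2S_Q\{q^4\}/\{Aq^k\}$; for $Q=[4]$ this equals $2S_{[3]}$. Only the sum gives $2S_Q$, and that is the identity the paper uses.

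Even the sum identity needs care for $Q=[22]$, which in Table~\ref{tab:d_Q} is not of the form $S_Q(1+\cdots)$. Writing $d_{[22]}=\{Aq^2\}\{Aq^{-2}\}(\{A\}+\{q\})(\{A\}+\{q^3\})/(\{q\}\{q^2\}^2\{q^3\})$, a direct computation gives $d_{[22]}(A,q)+d_{[22]}(-A,q)=2S_{[22]}-2$. The four-box terms therefore produce $\bar H-A^{-w}h^{[22]}$, not $\bar H$. Exact cancellation requires $h^{[22]}=0$, which is condition (i) in (\ref{4fact}) and holds for $T(4,n)$ and the HZ-factorisable examples. Otherwise an $h^{[22]}$ term must be carried into the final condition. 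The paper's one-line proof passes over this as well.

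\textbf{Step 2.} Your bracket is correct. But moving the $\chi_0$ terms across gives
$\sum_{|Q|=2}\big(\chi_Q(iA,-iq)d_Q(A,q)-\chi_Q(iA,iq)d_Q(-A,q)\big)=\chi_0(iA,iq)-\chi_0(iA,-iq)$.
With the definition in the statement, this is $+{\chi_0}_{\rm odd}(iA,iq)$, not $-{\chi_0}_{\rm odd}(iA,iq)$. So you do not get ``precisely'' (\ref{HKF4strands}): the sign convention for ${\chi_0}_{\rm odd}$ has to be reversed, and you should say so rather than assert agreement.

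\textbf{The $T(4,n)$ check.} It cannot detect this sign, because there (as for $10_{128}$) $\chi_0$ is independent of $q$ and ${\chi_0}_{\rm odd}=0$. Carried out in full, though, it would have exposed the Step~1 error. For $T(4,n)$ the bracket vanishes, so the four-box part alone must reproduce $\bar H$, and only the sum identity does that.
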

\begin{proof}
    The proof is similar to the one for 3-strands in Prop.~\ref{HKF3x0}, with the extra terms now given by (\ref{delta4}). Moreover, one should notice that for  4-strand knots 
    the writhe $w$ is odd, $h^Q(-q)=-h^Q(q)$ and $d_Q(A,q)+d_Q(-A,q)=2S_Q$.
\end{proof}
\noindent{}The condition (\ref{HKF4strands}) can be easily checked to be verified in the 4-strand HZ-factorisable examples of the previous sections, such as for $10_{128}$, $10_{132}$, $12n_{318}$, and that it fails to hold for the simplest 4-strand knot $6_1$, which is not HZ-factorisable. 

Two general  4-strand families that  satisfy the HZ-factorisability conditions (i), (ii) in (\ref{4fact}), generated by $F_3=(\sigma_2\sigma_1)^{3}$, $\tilde{E}_4=\sigma_3\sigma_2\sigma_1^2\sigma_2\sigma_3$ and $F_4=(\sigma_3\sigma_2\sigma_1)^{4}$   are  $\mathcal{K}_{j,k,l}^{(4)}:=\sigma_3\sigma_2\sigma_1\otimes F_3^j\otimes F_4^l\otimes \tilde{E}_4^k$, 
which includes $10_{132}=\mathcal{K}^{(4)}_{-2,0,1}$ and  torus knots when $j=k$ and $j=k+1$; 
and $\sigma_3\sigma_1^{-1}\sigma_2^{-2}\sigma_1\sigma_2^{-1}\sigma_1^{-1}\otimes F_3^j\otimes F_4^l\otimes \tilde{E}_4^k$, which includes $10_{128}$ and $12n_{318}$ at $j=0,k=0,l=1$, and $j=1,k=0,l=1$, respectively. Their Racah coefficients were computed\footnote{As remarked in \cite{Petrou3}, it is worth mentioning the computational efficiency of the character expansion as opposed to the skein relation. Via characters the complexity parameter is the number of strands rather than the number of crossings (e.g. the computation for 4-strand knots with $\sim70$ crossings can be dramatically decreased to a couple of minutes).}  using  the real $R$-matrices in \cite{Petrou3}. 
It can be shown by direct computation that the condition (\ref{HKF4strands}) holds for many of their members. Proving that they satisfy the HOMFLY--PT/Kauffman relation more generally may be achieved by a similar, but significantly more lengthy computation,  as the one that lead to the proof of Theorem~\ref{thm:Kjkl,H-KF}.

Similar to the above considerations apply in the case of two component links for which the HOMFLY--PT/Kauffman relation, in terms of $\widehat{\widehat{KF}}(\mathcal{L}):=\frac{z}{A-A^{-1}}\widetilde{KF}_{\rm even}(\mathcal{L})-\widetilde{KF}_{\rm odd}(\mathcal{L})$ and  the linking number ${\rm lk}(\mathcal{L})$,  reads
 \be\label{linkcriterionA}
H(\mathcal{L};A,z)+\widehat{\widehat{KF}}(\mathcal{L};A,z) = -\frac{zA^{-4lk(\mathcal{L})}}{A-A^{-1}}.
\ee
 In the 3-strand case, this may be expressed in terms of $\chi_0$ as
 \be\label{x0HKFLink}
\boxed{\widetilde{\chi_0}_{\rm even}(\mathcal{L};A,q)-\frac{A-A^{-1}}{q-q^{-1}}\widetilde{\chi_0}_{\rm odd}(\mathcal{L};A,q)=-i^wA^{-4 lk(\mathcal{L})+w}.}
\ee
The validity of this relation has been proven for the general family  of HZ-facrorisable, 2-component links $\mathcal{L}^{(3)}_{j,k,l}:=\sigma_2\otimes F_2^j\otimes F_3^l\otimes \tilde{E}_3^k$ in \cite{PetrouPHD}.

In the 4-strand case, however,  two-component link $L10n_{42}\{1\}$ that admits HZ factorisability provides  a counter example to the left implication of Conjecture~(\ref{conjINTRO}). As pointed out in \cite{Petrou2,Petrou3}, this link is exceptional in the sense that it can not be constructed by full twists and/or Jucys-Murphy twists applied to an HZ-factorisable base braid.


\section{Summary and discussion}
In the present article, we have outlined how  the $SO(N+1)$ character expansion for the  Kauffman polynomial can be used to 
articulate explicit conditions for  the HOMFLY--PT/Kauffman relation in terms of characters of the BMW algebra. Generalising the results of \cite{Petrou2}, in Theorem~\ref{thm:Kjkl,H-KF}  we have established  this relation for a general  family of HZ-factorisable 3-strand knots, which hyperbolically extends torus knots.  This result provides important evidence for  the validity of our conjectured equivalence between HZ-factorisability and the HOMFLY--PT/Kauffman relation in the 3-strand case. 

The aforementioned family  is constructed via three braid operations: full twists, partial full twists and  Jucys-Murphy twists. In fact, these  braid operations  provided an essential tool for the construction of hyperbolic families of HZ-factorisable knots with arbitrary number of strands in \cite{Petrou3}, due to their diagonal and commuting $R-$matrix representations. They also play an important role in the representation theory of the BMW algebra, as their combinations form commutative elements. However, 
 their direct effects on the  conditions  for the    HOMFLY--PT/Kauffman relation are not so straight forward, and hence proving the conjecture more generally remains an open problem. 

The Racah coefficients (6-j symbols), which appear in both the  $SO(N+1)$ and $SU(N)$ character expansions, 
have important applications in several areas of physics. These include 
the Potts (Ising) spin model in statistical mechanics \cite{Kauffman-Lins},  
  quantum computation (quantum gates) \cite{Kauffmangate}, and BPS invariants in  topological string theory
\cite{Petrou2,Bouchard,Mironov}.  The latter  are related to  topological excitations on the boundary in  spin moduli curves. In particular, the BPS invariants in a Calabi-Yau 3-fold are conjectured to be equivalent to the Gromov-Witten invariants
 \cite{Gopakumar}, which reduce to the intersection numbers on Riemann surfaces in two dimensions \cite{HikamiD}.   The  $SO(N+1)$ gauge symmetry group 
arises from  D-branes in the presence of an orientifold, and  becomes a useful tool for the interpretation of  BPS invariants  with boundaries (punctures) \cite{Bouchard}. Finally, the Racah coefficients 
may also play a significant role in non-perturbative  quantum gravity \cite{Kauffmangravity}. 
We hope that this study of the $SO(N+1)$ character expansion, in relation to $SU(N)$, will be useful for  such interesting applications in the future. 

\paragraph{Acknowledgments.}
We are grateful to Louis Kauffman for insightful discussions and suggestions and to Jun Murakami for explaining to us the higher strand expansions. We thank Juan Luis Araujo Abranches
 for his enlightening comments and to Reiko Toriumi for her continuous support. We are grateful to Tancredi Schettini Gherardini who pointed out a miscalculation of a 4-strand counter example to our conjecture, that was falsely claimed in a previous version of this article.
This work is supported by OIST funding  and by the collaboration fund  between the University of Tokyo and OIST.

\paragraph{Declaration of competing interest.} The authors have no competing interest to disclose. 

\addcontentsline{toc}{section}{Appendix A. Quantum dimensions of $SO(N+1)$}
\section*{Appendix A. Quantum dimensions of $SO(N+1)$}

For the even orthogonal Lie group $SO(2n)$, the character of the highest weight irreducible representation (irrep) $\Gamma_Q$, is given by the Weyl's character formula and 
can be expressed by the ratio of determinants as  \cite{Fulton}
\be\label{SO(2n)}
{\rm Char}\;\Gamma_{Q}= \frac{\text{det}|x_j^{l_i}+x_j^{-l_i}|+{\text{det}}|x_j^{l_i}-x_j^{-l_i}|}{{\text{det}}|x_j^{m_i}+x_j^{-m_i}|}.
\ee
Here $l_i= r_i+n-i$, $m_i=n-i$ for $i\in\{1,..,n\}$, where $r_i$ are the number of boxes in the $i^{\rm th}$ row of the Young diagram $Q$.
For instance, the Young diagram $Q=[2]$, has 2 boxes in the first row and hence  $(r_1,r_2,..,r_n)=(2,0,..,0)$. For $n=2$ the corresponding
 character  is 
\ba\label{charExample}
{\rm Char}\;\Gamma_{[2]}&=&\text{det} \begin{pmatrix} x_1^3 + x_1^{-3}&2\\
x_2^3+ x_2^{-3}&2\end{pmatrix}/
\text{det}\begin{pmatrix} 
x_1+x_1^{-1}&2\\
x_2+ x_2^{-1}& 2\\
\end{pmatrix} \nonumber\\
&=& x_1^2+x_1^{-2}+x_2^2+x_2^{-2}+(x_1+x_1^{-1})(x_2+x_2^{-1})+1.
\ea
The  dimension of this representation can be obtained by setting $x_1=x_2=1$ in (\ref{charExample}), which yields ${\rm dim}\;\Gamma_{[2]}=9$ for $SO(4)$. 

The (classical) dimension for highest weight irreps of $SO(2n)$ can  in fact be obtained directly by the 
formula (see e.g. p.410 of \cite{Fulton})  
\be\label{dimFormula}
\text{dim} \hskip1mm \Gamma_Q=\prod_{i<j} \frac{l_i^2-l_j^2}{m_i^2-m_j^2}.
\ee
For example, for $SO(6)$ ($n=3$) and $(r_1,r_2,r_3)=(2,0,0)$, 
we compute $l_1= r_1+ n-1=2+3-1=4, l_2= 1,l_3=0,m_1= 2,m_2=1,m_3=0$, with which (\ref{dimFormula}) yields $\text{dim}\;\Gamma_{[2]}= 20$. Similarly, one can compute  
${\rm dim}\;\Gamma_{[2]}= 35$ for $SO(8)$.

Since we are interested in  $SO(N+1)$, we set  $n=\frac{N+1}{2}$ and write a formula which gives the dimension of irreps for $Q=[2]$ for arbitrary $N$ as
\be\label{dimExample}
{\rm dim} \;\Gamma_{[2]}=N\left(\frac{N+1}{2}+1\right).
\ee
This agrees with the values mentioned above at $N=3,5,7$. Note that although we derived this using the formulas for the even orthogonal case, it turns out that the same expression gives the correct dimensions for the odd orthogonal case (explicit  formulas for the latter are provided in Lecture 24 of \cite{Fulton}).

The quantum dimension is obtained from the classical one in (\ref{dimExample}) by replacing each classical number $x\in\mathbb{Z}$ by the quantum number $\{q^x\}:=q^x-q^{-x}$. By setting $A=q^N$, we obtain the quantum dimension corresponding to the Young diagram $Q=[2]$ to be
\be
d_{[2]}= \frac{\{A\}}{\{q\}} \left(\frac{\{A q\}}{\{q^2\}}+1\right).
\ee
Similarly, we find ${\rm dim}\;\Gamma_{[11]}=N(\frac{N-1}{2}+1)$ and hence $d_{[11]}= \frac{\{A\}}{\{q\}}( \frac{\{A q^{-1}\}}{\{q^2\}}+1)$.
Notably, the dimension $d_{[11]}$ is related to $d_{[2]}$ by 
$q\to -1/q.$

In a similar way, we determine  formulas for the classic dimensions for higher partitions, which are listed in Table~\ref{tab:d_QClassical}.  
Examples with $3$ and $4$ boxes in the Young diagram are
${\rm dim}\;\Gamma_{[3]}=\frac{N(N+1)}{2}(\frac{N+2}{3}+1)$, ${\rm dim}\;\Gamma_{[4]}=\frac{N(N+1)  (N+2)}{6}( 1+ \frac{N+3}{4})$, respectively. 
These yield the quantum dimensions $d_{[3]}=\frac{\{A\}\{A q\}}{\{q\}\{q^2\}}( 1+\frac{\{Aq^2\}}{\{ q^3\}})=(1+\frac{\{q^3\}}{\{ Aq^2\}})S_{[3]}$ and
$d_{[4]}=( 1+\frac{\{q^4\}}{\{A q^3\}})S_{[4]}$, respectively, where $S_Q$ denotes the Schur functions, which are the quantum dimensions for $SU(N)$ and are listed in Table~\ref{table:S_Q}.

\begin{table}[H]
\centering
 \caption{Classical dimensions ${\rm dim} \;\Gamma_{Q}$ of irreps of $SO(N+1)$}.
    \label{tab:d_QClassical}
\scalebox{0.9}[0.9]{
\begin{tabular}{ll}
\toprule
$Q$ & ${\rm dim} \;\Gamma_{Q}$\\
\midrule
$[1]$ &$ N+1$  \\
\hline
\vspace{1mm}$[2]$ &$ \frac{1}{2}N(N+3)$ \\ 
\vspace{1mm}$[11]$ &$ \frac{1}{2}N(N+1)$ \\
\hline
\vspace{1mm}$[3]$ &$\frac{1}{6}N ( N+1) (N+5)$\\
\vspace{1mm}$[21]$ &$\frac{1}{3}(N-1)(N+1)  (N+3)$\\
\vspace{1mm}$[111]$ &$ \frac{1}{6}N (N+1)(N-1)$ \\
\hline
\vspace{1mm}$[4]$ &$ \frac{1}{24}N (N+1)(N+2)(N+7)$\\
\vspace{1mm}$[31]$ &$\frac{1}{8}N(N-1)(N+2) (N+5)$\\
\vspace{1mm}$[22]$ &$ \frac{1}{12}(N-2)(N+1) (N+2)(N+3)$\\
\vspace{1mm}$[211]$ &$\frac{1}{8}N (N+1) (N-2) (N+3)$\\
\vspace{1mm}$[1111]$ &$\frac{1}{24} N (N+1) (N-1) (N-2)$  \\
\hline
\vspace{1mm}$[5]$ &$ \frac{1}{120}N(N+1) (N+2) (N+3) (N+9)$ \\
\vspace{1mm}$[41]$ &$\frac{1}{30}(N-1) N (N+1) (N+3)(N+7)$\\
\vspace{1mm}$[32]$ &$ \frac{1}{24} N(N-2) (N+1) (N+3)(N+5)$ \\ 
\vspace{1mm}$[311]$ &$\frac{1}{20}(N-2)(N-1) (N+1) (N+2)(N+5) $ \\
\vspace{1mm}$[221]$ &$ \frac{1}{24}(N-3)N(N+1)  (N+2) (N+3) $ \\
\vspace{1mm}$[2111]$ &$\frac{1}{30}N (N+1) (N-1) (N-3) (N+3)$ \\
\vspace{1mm}$[11111]$ &$\frac{1}{120}N (N+1)(N-1)(N-2)(N-3)$\\
\bottomrule
\end{tabular}}
\end{table}

An alternative way to obtain  the quantum dimensional formula for $SO(N+1)$ 
 is through the explicit formula (originating from the
quantum version of Weyl's formula \cite{Weyl,Fulton})
\cite{Bouchard,MorozovU}
\ba\label{dimGamma}
d_Q &=& 
\prod_{1\le i\le j\le l(Q)} 
\frac{\{q^{r_i-r_j + j-i}\}\{A q^{r_i+r_j+1-i-j}\}}{\{q^{j-i}\}\{A q^{1-i-j}\}}\nonumber\\
&&\times
\prod_{k=1}^{l(Q)}\biggl( \frac{\{A^{1/2}q^{r_k-k+1/2}\}}{\{A^{1/2} q^{-k+1/2}\}}
\prod_{s=1}^{r_k}\frac{\{A q^{r_k+1-k-s-l(Q)}\}}{\{q^{s-k+l(Q)}\}}\biggr).
\ea
Here $l(Q)$ is the number of non empty rows of $Q$. 
\begin{table}[H]
\vskip 3mm
\centering
\caption{Quantum dimensions for $SU(N)$ with $A=q^N.$}\label{table:S_Q}
\scalebox{0.9}[0.9]{
\begin{tabular}{lcl}
\toprule
$S_{[1]}$ &\hspace{-4mm}=&\hspace{-4mm} $ {\{A\}}/{\{q\}}$   \\
\hline
$ S_{[2]}$ &\hspace{-4mm}=&\hspace{-4mm} $ {\{A\}\{A q\}}/{\{q\}\{q^2\}}$ \\
 $S_{[11]}$ &\hspace{-4mm}=&\hspace{-4mm} ${\{A\}\{A q^{-1}\}}/{\{q\}\{q^2\}}$\\
 \hline
$ S_{[3]}$ &\hspace{-4mm}=&\hspace{-4mm} ${\{A\}\{A q\}\{A q^2\}}/{\{q\}\{q^2\}\{q^3\}}$\\
 $S_{[21]}$ &\hspace{-4mm}=&\hspace{-4mm} ${\{A\}\{A q\}\{A q^{-1}\}}/{\{q\}^2\{q^3\}}$\\
$ S_{[111]}$ &\hspace{-4mm}=&\hspace{-4mm} ${\{A\}\{A q^{-1}\}\{A q^{-2}\}}/{\{q\}\{q^2\}\{q^3\}}$ 
 \\
\hline
$S_{[4]}$ &\hspace{-4mm}=&\hspace{-4mm} ${\{A\}\{A q\}\{A q^2\}\{A q^3\}}/{\{q\}\{q^2\}\{q^3\}\{q^4\}}$ \\
$ S_{[31]}$ &\hspace{-4mm}=&\hspace{-4mm} ${\{A\}\{A q\}\{A q^2\}\{A q^{-1}\}}/{\{q\}^2\{q^2\}\{q^4\}} $\\
 $S_{[22]}$ &\hspace{-4mm}=&\hspace{-4mm} ${\{A\}^2\{A q\}\{A q^{-1}\}}/{\{q\}\{q^2\}^2\{q^3\}} $
 \\   
$S_{[211]}$ &\hspace{-4mm}=&\hspace{-4mm} $ {\{A\}\{A q\}\{A q^{-1}\}\{A q^{-2}\}}/{\{q\}^2\{q^2\}\{q^4\}}$ 
  \\   
$S_{[1111]}$ &\hspace{-4mm}=&\hspace{-4mm} $ {\{A\}\{A q^{-1}\}\{A q^{-2}\}\{A q^{-3}\}}/{\{q\}\{q^2\}\{q^3\}\{q^4\}} $
 \\
\hline
$S_{[5]}$ &\hspace{-4mm}=&\hspace{-4mm} $ {\{A\}\{A q\}\{A q^2\}\{A q^3\}\{A q^4\}}/{\{q\}\{q^2\}\{q^3\}\{q^4\}\{q^5\}}$    \\
 $S_{[41]}$ &\hspace{-4mm}=&\hspace{-4mm} ${\{A q^{-1}\}\{A\}\{Aq\}\{A q^2\}\{A q^{3}\}}/{\{q\}^2\{q^2\}\{q^3\}\{q^5\}}$
\\
 $S_{[311]}$ &\hspace{-4mm}=&\hspace{-4mm} ${
\{A q^{-2}\}\{A q^{-1}\}\{A\}\{A q\}\{A q^2\}}/{\{q\}^2\{q^2\}^2
\{q^5\}}$
\\
 $S_{[32]}$ &\hspace{-4mm}=&\hspace{-4mm} $ {\{A\}^2\{A q\}\{A q^2\}\{A q^{-1}\}}/{\{q\}^2\{q^2\}\{q^3\}\{q^4\}}$ \\
$S_{[221]}$ &\hspace{-4mm}=&\hspace{-4mm} $ {\{A\}^2\{A q^{-1}\}\{A q^{-2}\}\{A q\}}/{\{q\}^2\{q^2\}\{q^3\}\{q^4\}}$
  \\
 $S_{[2111]}$ &\hspace{-4mm}=&\hspace{-4mm} ${\{A q\}\{A\}\{A q^{-1}\}\{A q^{-2}\}\{A q^{-3}\}}/{\{q\}^2\{q^2\}\{q^3\}\{q^5\}}$
 \\
 $S_{[11111]}$ &\hspace{-4mm}=&\hspace{-4mm} ${\{A\}\{A q^{-1}\}\{A q^{-2}\}\{A q^{-3}\}\{A q^{-4}\}}/{\{q\}\{q^2\}\{q^3\}\{q^4\}\{q^5\}}$    \\
\bottomrule
\end{tabular}}
 \vspace{5mm}
 \caption{Quantum dimensions of irreps of $SO(N)$ with $A=q^{N}$.}
\label{tab:d_Q}
\scalebox{0.9}[0.9]{
\begin{tabular}{lcl}
\toprule
 $d_{[1]}$&\hspace{-4mm}=&\hspace{-4mm} $1+    {\{A\}}/{\{q\}} $ \\
\hline
 $d_{[2]}$&\hspace{-4mm}=&\hspace{-4mm} $(1+ {\{q^2\}}/{\{A q\}})S_{[2]}$\\
  $d_{[11]}$&\hspace{-4mm}=&\hspace{-4mm} $(1+ {\{q^2\}}/{\{A q^{-1}\}})S_{[11]}$\\
 \hline
 $d_{[3]}$&\hspace{-4mm}=&\hspace{-4mm} $( 1+ {\{q^3\}}/{\{A q^2\}})S_{[3]}$\\
  $d_{[21]}$&\hspace{-4mm}=&\hspace{-4mm} $( 1+{\{q^3\}}/{\{A\}})S_{[21]} $\\
$d_{[111]}$&\hspace{-4mm}=&\hspace{-4mm} $ (1+{\{q^3\}}/{\{A q^{-2}\}}) S_{[111]}$\\
\hline
 $d_{[4]}$&\hspace{-4mm}=&\hspace{-4mm} $(1+{\{q^4\}}/{\{A q^3\}}) S_{[4]}$\\
 $d_{[31]}$&\hspace{-4mm}=&\hspace{-4mm} $(1+ {\{q^4\}}/{\{A q\}})S_{[31]}$\\
 $d_{[22]}$&\hspace{-4mm}=&\hspace{-4mm} $ {\{A q^2\}\{A q^{-2}\}}/{\{q\}\{q^2\}^2\{q^3\}}(A+q)$\\
 &\hspace{-4mm}&\hspace{-4mm} $\times (1-1/(A q)) (A+ q^3)(1-A^{-1}q^{-3})$\\   
$d_{[211]}$&\hspace{-4mm}=&\hspace{-4mm} $(1+{\{q^4\}}/{\{A q^{-1}\}}) S_{[211]}$ \\   
$d_{[1111]}$&\hspace{-4mm}=&\hspace{-4mm} $ (1+ {\{q^4\}}/{\{A q^{-3}\}}) S_{[1111]}$\\
\hline
 $d_{[5]}$&\hspace{-4mm}=&\hspace{-4mm} $(1+    {\{q^5\}}/{\{A q^4\}})S_{[5]}$  \\
 $d_{[41]}$&\hspace{-4mm}=&\hspace{-4mm} $ (1+ {\{q^5\}}/{\{A q^2\}})S_{[41]}$
\\
  $d_{[311]}$&\hspace{-4mm}=&\hspace{-4mm} $( 1+ {\{q^5\}}/{\{A\}})S_{[311]}$
\\
 $d_{[32]}$&\hspace{-4mm}=&\hspace{-4mm} ${\{A\}\{A q^{-2}\}\{A q\}\{A q^3\}\{A q^5\}}/{\{q\}^2\{q^2\}\{q^3\}\{q^4\}}$\\
 $d_{[221]}$&\hspace{-4mm}=&\hspace{-4mm} ${\{A\}\{A q\}\{A q^2\}\{A q^3\}\{A q^{-3}\}}/{\{q\}^2\{q^2\}\{q^3\}\{q^4\}}$\\
 
 $d_{[2111]}$&\hspace{-4mm}=&\hspace{-4mm} $(1+ {\{q^5\}}/{\{A q^{-2}\}})S_{[2111]}$\\
  $d_{[11111]}$&\hspace{-4mm}=&\hspace{-4mm} $ (1+   {\{q^{5}\}}/{\{A q^{-4}\}})S_{[11111]}$  \\
\bottomrule
\end{tabular}}
\end{table}
For instance, $l(Q)=1$ for  $Q=[1],[2]$ and $[3]$,  and hence the first product  in (\ref{dimGamma}) becomes unity, yielding 
\ba
&&\hspace{-2mm}d_{[1]}= 
\frac{\{A^{1/2}q^{1/2}\}\{A q^{-1}\}}{\{A^{1/2}q^{-1/2}\}\{q\}}=\left(1+ \frac{\{A\}}{\{q\}}\right)\\
&&\hspace{-2mm}d_{[2]}=
\frac{\{A^{1/2}q^{3/2}\}\{A q^{-1}\}\{A\}}{\{A^{1/2}q^{-1/2}\}\{q\}\{q^2\}}=\left(1+ \frac{\{q^2\}}{\{A q\}}\right)\frac{\{A\}\{A q\}}{\{q\}\{q^2\}}\nonumber\\
&&\hspace{-2mm}d_{[3]}=\frac{\{A^{1/2} q^{5/2}\}\{A q^{-1}\}\{A\}\{A q\}\{A q^2\}}{\{A^{1/2}q^{1/2}\}\{A q^4\}\{q\}\{q^2\}\{q^3\}}=\left( 1+\frac{ {\{q^3\}}}{{\{A q^2\}}}\right)\frac{{\{A\}\{A q\}\{A q^2\}}}{{\{q\}\{q^2\}\{q^3\}}}. \nonumber
\ea
These expressions are the same as the ones listed in Table~\ref{tab:d_Q}.

The quantum dimensions $d_Q$ satisfy the following Plethysm identities   
\ba
d_{[1]}^2&=& d_{[2]}+ d_{[11]}+1\nonumber\\
d_{[1]}^3&=& d_{[3]}+2 d_{[21]}+ d_{[111]}+ 3 d_{[1]}\nonumber\\
d_{[1]}^4 &=& d_{[4]}+ 3 d_{[31]}+2d_{[22]}+3 d_{[211]}+d_{[1111]}+ 6d_{[2]}+6 d_{[11]}+ 3\nonumber\\
d_{[2]}d_{[1]}&=& d_{[3]}+ d_{[21]}+d_{[1]}\nonumber\\
d_{[11]} d_{[1]}&=& d_{[21]}+ d_{[111]}+ d_{[1]}\nonumber\\
\hspace{-17mm}d_{[3]} d_{[1]}&=&d_{[4]}+ d_{[31]}+ d_{[2]}\nonumber\\
\hspace{-17mm}d_{[21]} d_{[1]}&=&d_{[31]}+ d_{[22]}+ d_{[211]}+ d_{[2]}+ d_{[11]}\nonumber\\
\hspace{-17mm}d_{[111]} d_{[1]}&=& d_{[211]}+ d_{[1111]} + d_{[11]}.
\ea

\addcontentsline{toc}{section}{Appendix B.  Jaeger's Theorem}
\section*{Appendix B.  Jaeger's Theorem}
The relation between the HOMFLY--PT polynomial and Kauffman polynomials $H=\widehat{KF}$ can be examined through Jaeger's state expansion (defined in p.219 of  \cite{Kauffman}). The skein relation for HOMFLY--PT polynomial is
\be
aH(\img{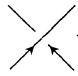})-a^{-1}H(\img{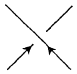})= z H(\img{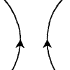})
\ee
and the unnormalised version is defined as $\bar{H}(a,z)=\frac{a-a^{-1}}{z}H(a,z)$. The HOMFLY--PT polynomial has a regular isotopy version $R(\mathcal{K};a,z)=a^{w(\mathcal{K})}\bar{H}(a,z)$, defined by
\ba\label{Rskein}
&&R(\img{positivecrossing.PNG})-R(\img{negativecrossing.PNG})= z R(\img{zero_crossing.PNG});\;\;R(\bigcirc)=\frac{a-a^{-1}}{z}\nonumber\\
&& R(\img{NegLoop.PNG}) = a R(\img{NonLoop.PNG}), \hskip 2mm R(\img{PosLoop.PNG})= a^{-1}R(\img{NonLoop.PNG}).
\ea

According to Jaeger's theorem,  the (unnormalised) Dubrovnik version of the Kauffman polynomial of a knot $\mathcal{K}$ can be expressed as a state sum $[\mathcal{K}](a,q)$ over HOMFLY--PT polynomials 
 \ba\label{jaegerSum}
 [\mathcal{K}](a,q)&=&\sum_{\sigma}[\mathcal{K}|\sigma][\sigma](a,q)
\nonumber\\
&=&\sum_{\sigma}c_\sigma(qa^{-1})^{{\rm rot }(\sigma)}R_{\sigma}(a,q)=\bar{\Lambda}(q^{-1}a^2,q-q^{-1}),
 \ea
 where  $\sigma$ is an oriented diagram (state) and $[\sigma](a,q)=(qa^{-1})^{{\rm rot }(\sigma)}R_{\sigma}(a,q)$, in which $R$ is defined in (\ref{Rskein}) and ${\rm rot}(\sigma)$ is the Whitney degree\footnote{The Whitney degree of a anticlockwise oriented circle is $+1$, while it is $-1$ for clockwise orientation. Given an oriented knot diagram $\sigma$, ${\rm rot}(\sigma)$ can be computed by adding the Whitney degrees of its Seifert circles.}, while the coefficient $[\mathcal{K}|\sigma]=:c_\sigma $ is a polynomial in $\{z,\pm1\}$ as determined by the relation
 \be\label{jaegerskein}
[\img{Kauffnegative.PNG}]=z\left([\img{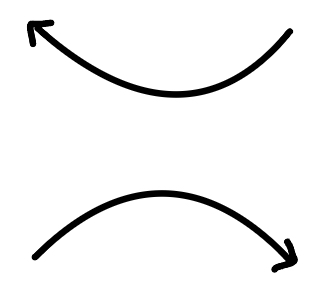}]-[\img{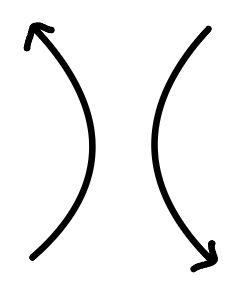}]\right)+[\img{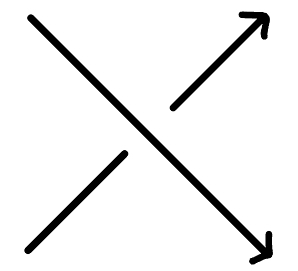}]+[\img{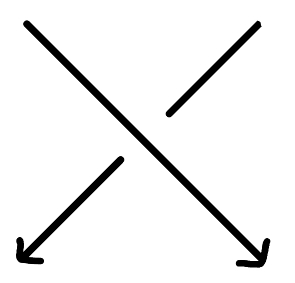}]+[\img{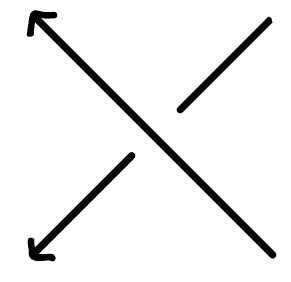}]+[\img{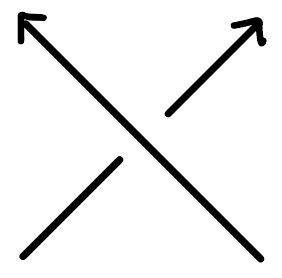}].
 \ee
 A state $\sigma$ contributes in the state sum only if its global orientation is compatible.
 For instance, for the right handed trefoil knot the various oriented diagrams  with compatible orientations that contribute in the state sum of (\ref{jaegerSum}) are shown in Fig.~\ref{fig:TrefStates}. 
 \begin{figure}[h!]
     \centering
\includegraphics[scale=0.17,angle=1]{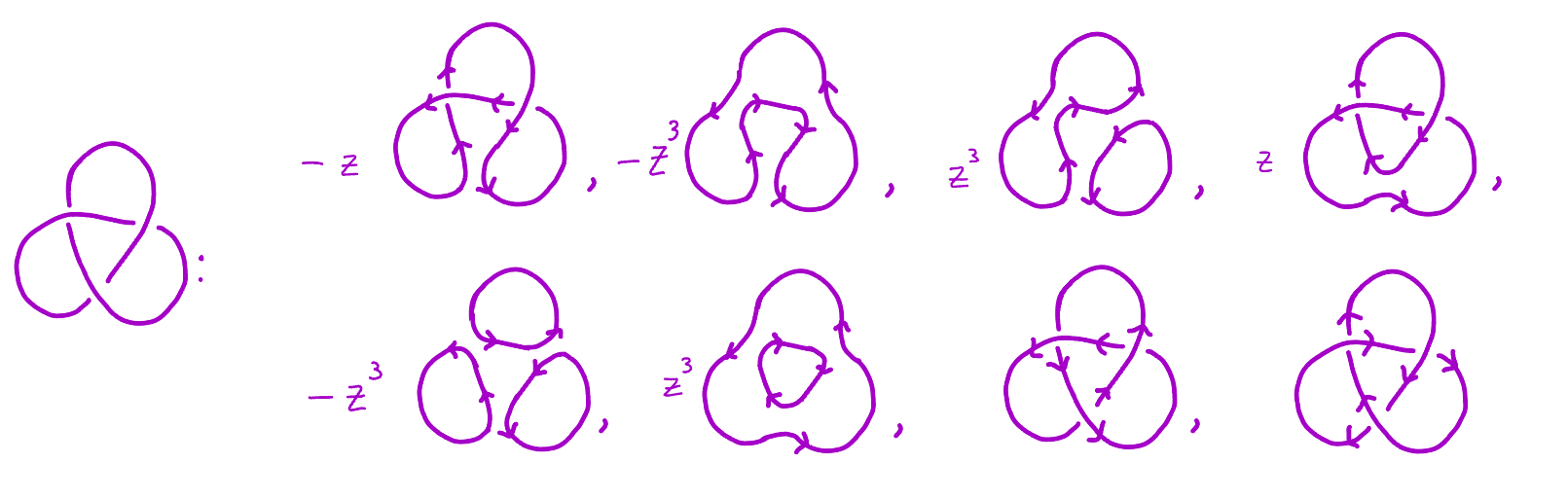}
     \caption{Various  oriented diagrams contributing in the state sum (\ref{jaegerSum}) for the trefoil knot (left), along with their coefficients $[\mathcal{K}|\sigma]$. }
     \label{fig:TrefStates}
 \end{figure}
 Note that each configuration corresponding to diagrams  in the top row will appear 3 times with the same coefficient, since the same splitting  can be applied to any of the 3 crossings of the trefoil (see Fig.~\ref{fig:3options} for an example).
 \begin{figure}[h!]
     \centering
\includegraphics[scale=0.17]{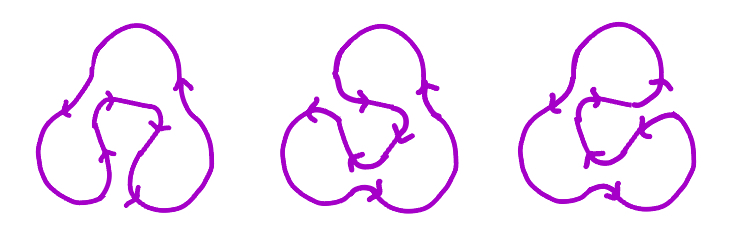}
     \caption{Three oriented states with the same coefficient $c_\sigma(z)=[\mathcal{K}|\sigma]=-z^3$. }
     \label{fig:3options}
 \end{figure}
Hence a total 16 states (out of 27 possibilities) contribute  for the trefoil knot.

 Using  that $\bar{H}(A,z)=A^{-w}R(\mathcal{K};A,z)$ and by setting $A=q^{-1}a^2,$ $z=q-q^{-1}$ so that
$\widetilde{\overline{KF}}(\mathcal{K};q^{-1}a^2,q-q^{-1})=-(q^{-1}a^2)^{-w}[\mathcal{K}](ia,q)$,  (\ref{HKFbar}) can be written as
 \ba
&&\hspace{-12mm}R(q^{-1}a^2,q-q^{-1})=-\frac{1}{2}([\mathcal{K}](ia,q)-[\mathcal{K}](a,-q))\\
\hspace{1mm}&&= -\frac{1}{2}\sum_{\sigma}\left(c_\sigma(q)\left(\frac{q}{ia}\right)^{{\rm rot}(\sigma)}R_{\sigma}(ia,q)-c_\sigma(-q)\left(\frac{-q}{a}\right)^{{\rm rot}(\sigma)}R_{\sigma}(a,-q)\right),\nonumber
 \ea
which becomes a relation  between the HOMFLY--PT polynomial of the knot and its states. This relation is verified to be true in the case of the trefoil knot, for which the sum over states $\sigma$ in the RHS involves the states presented in Fig.~\ref{fig:TrefStates}.

The Jaeger expansion can also be used to write down a formal character expansion for the Dubrovnik polynomial in terms of the characters of $SU(N)$, i.e. the Schur functions $S_Q$, listed in Appendix A. 
By setting $A=q^{-1}a^2$ and $z=q-q^{-1}$, for a knot $\mathcal{K}$ that has a braid representative with $m$ strands and writhe $w$ this can be expressed as
\be\label{JeagerKchi}
\bar{Y}(\mathcal{K};A,z)=(qa^{-2})^w\sum_{\sigma}c_{\sigma}(qa^{-1})^{{\rm rot} (\sigma)}\sum_{Q(\sigma)}h^Q(\sigma)S_Q.
\ee
Here the
last summation refers to  the character expansion corresponding to the state $\sigma$, which usually can be expressed as a braid with less than $m$ strands.
For instance, for the 3-stranded unknot $T(3,1)$ we find
\ba
&&\hspace{-6mm}\bar{Y}(T(3,1);a,q)=(1-q^4a^{-4})S_{[2]}-q^{-2}(1-q^4a^{-4})S_{[11]}\\
&&\hspace{-1mm}+q^3a^{-3}(1+q^2a^{-2})S_{[3]}-qa^{-3}(1+q^2a^{-2})S_{[21]}+q^{-1}a^{-3}(1+q^2a^{-2})S_{[111]},\nonumber
\ea
which simplifies to the standard expression $\bar{Y}(\bigcirc)=\frac{A-A^{-1}}{z}+1$ at $A=q^{-1}a^2$ and $z=q-q^{-1}$, as expected. 

Although  state sums do not provide the most efficient description of the Dubrovnik polynomial and its relation to HOMFLY--PT, as the above examples show, they are still a powerful theoretical tool and hence we include this Appendix here for completeness. An interesting application  could be to use the Jaeger's state sum to derive a homology theory corresponding to the Kauffman polynomial, in an analogous way as the Kauffman bracket (the state sum of the Jones polynomial) led to the Khovanov homology \cite{Khovanov}.  


\end{document}